\newtheorem{theorem}{Theorem}
\newtheorem{lemma}{Lemma}
\newtheorem{defn}{Definition}
\newtheorem{claim}{Claim}
\newtheorem{property}{Property}
\def\colorful{1}
\title{A Lower Bound on Cycle-Finding in Sparse Digraphs}
\author{Xi Chen, Tim Randolph, Rocco A.~Servedio, Timothy Sun 
\\ \texttt{ \{xichen, rocco, tim\}@cs.columbia.edu, t.randolph@columbia.edu }
\\ Columbia University}
\date{}
\newcommand{\br}{Bender-Ron}
\newcommand{\BR}{\mathbb{BR}}
\newcommand{\CH}{\mathbb{S}}
\def\BKG{\mathsf{BKG}}
\newcommand{\eps}{\epsilon}
\newcommand{\simple}{\mathrm{simple}}
\newcommand{\Prx}{\mathop{{\bf Pr}\/}}
\newcommand{\UB}{{Q}}
\newcommand{\UBQ}{{Q^*}}
\def\bS{\boldsymbol{S}}
\newcommand{\bX}{\boldsymbol{X}}
\newcommand{\bY}{\boldsymbol{Y}}
\newcommand{\bG}{\boldsymbol{G}}
\newcommand{\bC}{\boldsymbol{C}}
\newcommand{\KG}{\mathsf{KG}}
\newcommand{\VKG}{\mathsf{VKG}}
\newcommand{\alga}{\mathcal{A}}
\newcommand{\MinDegree}{80}
\newcommand{\MaxEpsilon}{1/60}
\def\calT{\mathcal{T}}
\def\sb{\text{blue}}
\def\sr{\text{red}}
\def\sh{\mathsf{h}}
\def\calE{\mathcal{E}}
\def\bB{\mathbf{B}}
\def\bR{\mathbf{R}}
\def\calT{\mathcal{T}}
\def\sw{\mathsf{w}}
\def\bU{\mathbb{U}}
\def\bbU{\mathbb{U}}
\def\anc{\mathsf{anc}}
\def\calA{\mathcal{A}}
\begin{document}

\maketitle
\begin{abstract}

We consider the problem of finding a cycle in a sparse directed graph $G$ that is promised to be far from acyclic, meaning that the smallest feedback arc set in $G$ is large. We prove an information-theoretic lower bound, showing that for $N$-vertex graphs with constant outdegree any algorithm for this problem must make $\tilde{\Omega}(N^{5/9})$ queries to an adjacency list representation of $G$. In the language of property testing, our result is an $\tilde{\Omega}(N^{5/9})$ lower bound on the query complexity of one-sided algorithms for testing whether sparse digraphs with constant outdegree are far from acyclic. This is the first improvement on the $\Omega(\sqrt{N})$ lower bound, implicit in Bender and Ron \cite{BR02}, which follows from a simple birthday paradox argument.
\end{abstract}

\thispagestyle{empty}

\newpage
\setcounter{page}{1}

\section{Introduction\vspace{-0.06cm}}

In the current massive data era there is great interest in the abilities and limita\-tions of \mbox{sublinear} time algorithms for various computational problems.  In particular, in recent years a number of~researchers have studied sublinear time algorithms for fundamental graph problems such as approximating  the size of the minimum vertex cover \cite{PR07,MR09,NO08,YYI09,HKNO09,ORRR12}, the number of connected components \cite{CRT05}, maximum matching \cite{NO08,YYI09}, and the minimum spanning tree weight \cite{CRT05, CS09,CEFMNRS05}; counting 
edges \cite{Feige06,GR08,Beame18}, stars \cite{Gonen11,Aliakbarpour2018}, triangles  \cite{ELRS17}, $k$-cliques \cite{Eden18}, and arbitrary subgraphs \cite{AKK18}; finding forbidden minors \cite{KSS18,KSS19}; and checking $k$-colorability \cite{RodlDuke85}, bipartiteness \cite{GoldreichRon02}, planarity \cite{BSS08}, and more. The sublinear time regime imposes natural constraints on algorithms. For instance, a simple ``needle in a haystack'' lower bound argument shows that it is impossible~to distinguish acyclic graphs from graphs with one or more cycles in sublinear time. As a result, sublinear graph algorithms typically provide either approximate guarantees on their output\footnote{For instance, the algorithms of  \cite{ORRR12, CRT05, ELRS17,AKK18}.} or are designed for property testing-style problems in which the input graph $G$ is promised to satisfy some condition that allows a sublinear algorithm to succeed.\footnote{For instance, the algorithms of \cite{KSS18,RodlDuke85,GoldreichRon02,BSS08}.} Our results are of the second type: \emph{We prove a lower bound on the running time of algorithms for finding cycles in sparse digraphs that are promised to be not too close to acyclic.}

To motivate our inquiry, we observe that many of the most fascinating and enigmatic objects of modern scientific research, such as brains, neural networks, social networks, and the Internet, are naturally modeled as \emph{massive, sparse, directed graphs}. Thus it is a compelling goal to understand the capabilities of sublinear time algorithms on such graphs. Despite this fact, although there is a substantial literature on property testing in general undirected graphs (see Chapters 8, 9, and 10 of \cite{Goldreich17book}), we are aware of fewer works on sublinear time algorithms or property testing on sparse directed graphs \cite{OR11,HS12,HS13,CPS16}.  The most directly relevant previous work that we are aware of is the early paper of Bender and Ron \cite{BR02} on testing acyclicity in directed graphs, which we discuss in detail below.\vspace{-0.1cm}

\subsection{The query model and promise problem that we consider\vspace{-0.05cm}}

Throughout this work, we consider digraphs on $N$ vertices named $[N]:=\{1,\ldots,N\}$ in which the outdegree of each vertex~is bounded from above by a small absolute constant $d$. It suffices to take $d \geq \MinDegree$ for our main result to hold. Graphs are represented using the adjacency list model, in which a query consists of a vertex $u \in [N]$ and an index $i \in [d]$. In response, the algorithm receives the $i^{th}$ outneighbor of $u$ or an empty string if $u$ has fewer than $i$ outneighbors. The query complexity of an algorithm is the maximum number of queries that it makes on any $N$-vertex graph.

The algorithmic problem we consider is that of outputting a directed cycle given an input graph $G$. The promise that $G$ contains at least one cycle is insufficient to allow sublinear time algorithms: for instance, if $G$ consists of a single constant-length cycle and all other vertices are isolated, or if $G$ consists of a single cycle of length $N$, $\Omega(N)$ queries are required. Hence, in the spirit of property testing, we consider the promise problem in which the input graph $G$ is promised to be \emph{$\eps$-far from acyclic}. This means that the smallest \emph{feedback arc set}\footnote{Recall that a subset $S \subset E$ of directed edges in a graph is a feedback arc set if every directed cycle in $G$ contains at least one edge in $S$, or equivalently, deleting all the edges in $S$ makes $G$ become acyclic.} of $G$ is of size at least $\eps d N$.
 
As we discuss later in item (1) of \Cref{sec:future}, the promise that $G$ is $\eps$-far from acyclic ensures that $G$ must contain very short cycles \cite{Fox18}, so this promise eliminates the concern that outputting a directed cycle will already necessitate $\Omega(N)$ runtime.  However, it is far from clear how many queries may be required to \emph{find} a cycle in sparse directed graphs that are $\eps$-far from acyclic.  This question was implicitly considered by Bender and Ron: in \cite{BR02} they gave an $\Omega(N^{1/3})$-query lower bound on property testing algorithms for testing whether a bounded-degree digraph is acyclic versus $\eps$-far from acyclic with two-sided error in the adjacency list model.  Implicit in the proof of their $\Omega(N^{1/3})$ lower bound is an $\Omega(N^{1/2})$ lower bound for one-sided testers, or equivalently, for algorithms that find a directed cycle in far-from-acyclic bounded-degree digraphs.  We give a proof sketch of this lower bound in \Cref{sec:techniques-BR}; as we explain there, their $\Omega(N^{1/2})$ lower bound is based on a simple birthday paradox argument but such an argument cannot succeed in obtaining an $\omega(N^{1/2})$ lower bound.  We note that Bender and Ron \cite{BR02} state as an explicit goal for future work the problem of improving their lower bound, and that acyclicity testing in bounded-degree digraphs is listed as ``Open Problem \#41'' on the website \texttt{sublinear.info}.\footnote{\href{https://sublinear.info/index.php?title=Open_Problems:4}{\texttt{https://sublinear.info/index.php?title=Open\_Problems:41}}}

\subsection{Our result:  An $\tilde{\Omega}(N^{5/9})$-query lower bound.}  
Our main result is a proof that any randomized algorithm under the 
  adjacency list  query model must make $\tilde{\Omega}(N^{5/9})$ queries to find a cycle in a sparse $N$-vertex digraph that is $\eps$-far from acyclic. The lower bound holds even if $\eps$ is a fixed constant. 
In more detail, our main result is the following:

\begin{restatable}[Main theorem]{theorem}{main}
\label{thm:main}
Let $d,\eps$ be fixed constants with $d \geq \MinDegree$ and $\eps \le \MaxEpsilon$, and let $G$ be an arbitrary digraph, promised to be $\eps$-far from acyclic and with outdegree bounded above by $d$. Any algorithm that, given query access to the adjacency list representation of $G$, outputs a directed cycle in $G$ with constant probability must make $\tilde{\Omega}(N^{5/9})$ queries.
\end{restatable}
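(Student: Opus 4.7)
The plan is to prove the lower bound via Yao's minimax principle, which reduces the randomized query lower bound to constructing a distribution $\mathcal{D}$ over $N$-vertex digraphs of outdegree at most $d$ such that (i) $\bG \sim \mathcal{D}$ is $\epsilon$-far from acyclic with probability $1-o(1)$, and (ii) every deterministic algorithm $\alga$ making at most $T = \tilde{O}(N^{5/9})$ adjacency-list queries outputs a directed cycle with probability $o(1)$ over $\bG \sim \mathcal{D}$. The $o(1)$ slack in (i) is absorbed in a standard way by conditioning on the far-from-acyclic event, so the heart of the matter is designing $\mathcal{D}$ and analyzing (ii).

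For the construction, I would use a layered random digraph with two kinds of vertices, suggestively "blue" and "red": the bulk of $\bG$ is a blue structure whose local out-neighborhoods are random and tree-like, while a sparser red subgraph carries enough back-arcs to force $\bG$ to be $\epsilon$-far from acyclic. The blue portion should be organized into epochs of controlled width and depth so that, locally, exploring blue vertices only reveals a shallow tree. The red arcs should be arranged so that reaching a red vertex from a blue starting point requires threading through a narrow corridor spanning many epochs, making red discovery unlikely within few queries. Crucially, the randomness in $\mathcal{D}$ should be compatible with the principle of deferred decisions: each query can be answered by sampling just enough fresh randomness, so the revealed part of $\bG$ can be tracked as an evolving knowledge graph $\KG$ in the sense suggested by the macros in the paper.

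The core of the argument is to show that $\KG$ remains acyclic throughout all $T$ queries, with probability $1-o(1)$. I would identify a short list of bad events whose complement implies $\KG$ is acyclic: a birthday-style collision between two distinct blue exploration branches, an unexpectedly early discovery of a red vertex, and the creation of a back-arc from a red vertex into an already-visited epoch. Each bad event's probability per query is bounded by a ratio of the form $\mathrm{poly}(T)/N^{c}$ for some exponent $c$ depending on construction parameters (blue width, epoch count, red density), after which a union bound over the at most $T$ queries finishes the analysis.

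The main obstacle will be parameter balancing. The exponent $5/9$ must emerge from simultaneously equating the per-step probabilities of the three competing bad events under two hard constraints: the graph must be $\epsilon$-far from acyclic, which lower-bounds the number of red back-arcs and hence how densely the red skeleton can be hidden; and the outdegree must be at most $d$, which upper-bounds the rate at which the blue exploration tree can branch and therefore how quickly new vertices are exposed. Verifying that blue width, corridor length, epoch count, and red density can be tuned to enforce the $5/9$ threshold, all while respecting $d \geq \MinDegree$ and $\epsilon \leq \MaxEpsilon$, is the delicate point and I expect it to occupy most of the technical work. Once the parameters are fixed, the union bound and Yao's inequality assemble into the claimed $\tilde{\Omega}(N^{5/9})$ bound.
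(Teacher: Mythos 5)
There are two genuine gaps here, one in the construction and one in the analysis. First, your construction is inverted relative to what the promise allows. You propose that the cycles (the ``back-arcs'') live in a \emph{sparser red subgraph} hidden behind a narrow corridor, while the blue bulk is tree-like. But if $G$ is $\eps$-far from acyclic with $\eps$ a constant and outdegree at most $d=O(1)$, every feedback arc set has $\Omega(N)$ edges, so the cyclic core necessarily touches a constant fraction of all vertices; it cannot be hidden behind a corridor, since a uniformly random seed vertex lands inside it with constant probability and no ``threading'' is required to reach it. The paper's construction does the opposite: the far-from-acyclicity is carried entirely by the \emph{blue} part ($N$ of $3N$ vertices, with dense random blue-to-blue edges), while the red part is a strictly acyclic layered trap whose only role is to absorb queries. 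The hardness comes not from the cycles being hard to reach but from the algorithm's inability to tell locally whether it is exploring the cyclic blue region or the dead-end red region.

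Second, your core analytic step --- showing $\KG$ stays acyclic by union-bounding bad events that include ``a birthday-style collision between two distinct blue exploration branches,'' each with per-query probability $\mathrm{poly}(T)/N^{c}$ --- is exactly the birthday-paradox argument that the paper explains cannot give anything beyond $\Omega(N^{1/2})$. With $T=N^{5/9}$ queries, the expected number of collisions is $\Theta(T^2/N)=\Theta(N^{1/9})\to\infty$, so your good event fails with probability $1-o(1)$ and the union bound collapses. A collision does not by itself create a cycle (only a collision with an \emph{ancestor} does), so the argument can be salvaged only by bounding how large the ancestor set of any blue vertex can grow as collisions accumulate. That is precisely the paper's main technical contribution: it decomposes the query sequence into epochs ending at collisions or after $L/2$ queries, bounds the number of epochs and of blue-surprise epochs, proves that within one epoch no blue path of length more than $4\log N$ appears, and multiplies these bounds to control the maximum ancestor count, which in turn bounds the per-query probability of closing a cycle. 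Your proposal contains no mechanism playing this role, and without one the parameter balancing you describe cannot produce an exponent above $1/2$.
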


We give a detailed discussion of our techniques in \Cref{sec:techniques}; as explained there,  the arguments underlying our lower bound are significantly more involved, both conceptually and technically, than the $\Omega(N^{1/2})$ lower bound of \cite{BR02} for the same problem. 

\section{Preliminaries}

A \emph{directed graph} (or \emph{digraph}) $G = (V, E)$ consists of a set $V$ of vertices and a set $E$ of directed edges. Each edge directed from $u$ to $v$ is represented by the pair $(u,v)$. The \emph{outdegree} (resp. \emph{indegree}) of a vertex $u$ is the number of edges $(u,v)$ (resp. $(v,u)$) between $u$ and an \emph{outneighbor} (resp. \emph{inneighbor})~$v \in V$. We say a digraph has outdegree bounded by $d$ if every vertex has outdegree at most $d$. A digraph is \emph{$\eps$-far from acyclic} if the minimum feedback arc set has size $\eps d N$ (that is, at least $\eps d  N$ edges must be removed to make $G$ acyclic). An \emph{out-tree} is an acyclic digraph in which there exists a unique directed path from a \emph{root} vertex to every other vertex. 
A vertex in an out-tree with no outgoing edge is called a \emph{leaf}; otherwise it is called an \emph{internal vertex}. An out-tree is said to have degree $d$ if every internal vertex has outdegree exactly $d$.

Given a positive integer $n$, we write $[n]$ to denote $\{1,\ldots,n\}$. For fixed $d$ and $\eps$, we consider the problem of finding a cycle in a digraph $G=([N],E)$, with outdegree bounded by $d$, that is $\eps$-far from acyclic. Algorithms may query the \emph{adjacency list representation} of $G$ as follows. We assume the algorithm knows $N$. A query consists of a vertex $u \in [N]$ and an index $i \in [d]$. In response, the algorithm receives the $i^{th}$ outneighbor of $u$ or an empty string if $u$ has fewer than $i$ neighbors. For convenience, we simplify the adjacency list query model to the $\emph{vertex query}$ model, in which the algorithm simply queries a vertex $u$ and receives an ordered list containing all outneighbors~of $u$. Clearly, algorithms on digraphs with maximum outdegree at most $d$ under the vertex query model can be implemented in the adjacency list model by increasing the number of queries by a factor of $d$, and thus asymptotic lower bounds in the vertex query model also hold in the adjacency list model. 
   
\section{Our techniques} \label{sec:techniques}

As is standard in property testing, we employ Yao's principle \cite{Yao:77} to prove our lower bound.  By this principle, to prove \Cref{thm:main} it suffices to define a probability distribution over $N$-vertex digraphs with outdegree
  bounded by $d$ and argue that 
\begin{enumerate}
    \item A random $\bG$ drawn from this distribution is $\eps$-far from acyclic with probability $1-o_N(1)$.
    \item Any deterministic algorithm $\alga$ that makes 
    $$Q^*:=\frac{N^{5/9}}{\log N}$$ queries to $\bG$ finds a cycle with probability $o_N(1)$. 
\end{enumerate}

In this section we first present a simple distribution from \cite{BR02} and sketch the $\Omega(N^{1/2})$ lower bound for this distribution that is implicit in the arguments of \cite{BR02}. We then outline the difficulty inherent in proving an
  asymptotically better lower bound, informally describe the distribution $\BR$ that we use for \Cref{thm:main}, and outline our proof of the theorem.

\subsection{A simple $\Omega(N^{1/2})$ lower bound due to Bender and Ron} \label{sec:techniques-BR}

The distribution over sparse digraphs we now describe corresponds to the distribution  $\mathscr{G}_2$ defined in Section~4 of \cite{BR02}; we denote this distribution by $\BR_{\simple} := \BR_{\simple}(N, d).$ A graph $\bG$ drawn from $\BR_{\simple}$ is generated by randomly partitioning the $N$ vertices $\{1,\dots,N\}$ into two equal-size subsets $\bS_1$ and $\bS_2$ and taking $d$ random directed matchings from $\bS_1$ to $\bS_2$ and $d$ random directed matchings from $\bS_2$ to $\bS_1$ as the edges of $\bG.$  

A straightforward probabilistic analysis (see Lemma~5 of \cite{BR02}) shows that for any constant $d \geq 128$, a random graph $\bG \sim \BR_{\simple}$ is $\eps$-far from acyclic for $\eps=1/16$. To complete the lower bound, it remains to argue that any deterministic algorithm $\calA$ that makes $o(N^{1/2})$ queries finds~a directed cycle in $\bG \sim \BR_{\simple}$ with probability $o_N(1)$. This follows from the following stronger property: with probability $1-o_N(1)$ over the choice of $\bG \sim \BR_{\simple}$, no deterministic algorithm that makes $o(N^{1/2})$ queries receives in response to a query a vertex it has previously observed, either as input to or output from a query.\footnote{We assume without loss of generality that the algorithm never repeats a previous query.} This property follows from a standard birthday paradox type argument, i.e., the fact that a sequence of $o(N^{1/2})$ uniform samples from an $N$-element set samples the same element twice with probability $o_N(1)$.

\subsection{A challenge in going beyond $N^{1/2}$ many queries}

Another birthday paradox type argument demonstrates that a random walk in $\bG \sim \BR_{\simple}$ will collide with itself in $O(N^{1/2})$ steps with high probability, thus yielding a cycle. Hence a different construction must be considered to obtain an $\omega(N^{1/2})$ lower bound.

The essence of the simple $\Omega(N^{1/2})$ lower bound is that with high probability, each of $o(N^{1/2})$ many queries yields an answer that is uniform over all previously unseen vertices, in which case the algorithm receives \emph{no useful information} about the underlying graph. Unfortunately, this attractive property does \emph{not} hold for algorithms that make $\omega(N^{1/2})$ queries. For example, an~algorithm that repeatedly queries {$(u,i)$} pairs drawn uniformly at random from $[N]\times [d]$ would observe i.i.d. draws from some distribution over $[N]$. Because $\omega(N^{1/2})$ i.i.d. draws from \emph{any} distribution supported on at most $N$ elements will result in $\omega_N(1)$ collisions with high probability, any argument establishing an $\omega(N^{1/2})$ lower bound must contend with the nontrivial information that algorithms receive about the unknown underlying graph through collisions. Indeed, the central difficulty in proving an $\omega(N^{1/2})$ lower bound is showing that no algorithm can gain enough information from induced collisions to find a cycle.

\subsection{Our construction and a sketch of our main ideas} \label{sec:techniques-construction}

We now give an informal description of the distribution $\BR := \BR(N,d)$ that we analyze (a detailed description is given in \Cref{sec:procedure}).  This distribution is a modified version of a construction proposed by Bender and Ron in \cite{BR02}.

Each graph in the support of $\BR$ has $3N$ vertices, and each vertex has outdegree either $d$ or $0$. A graph $\bG$ drawn from $\BR$ is obtained as follows:  $N$ vertices are randomly selected and designated as \emph{blue} vertices, and the remaining $2N$ vertices are designated as \emph{red} vertices. Red vertices are randomly partitioned into $L$ many layers $R_1,\dots,R_L$, each containing $W=2N/L$ vertices.\footnote{Both $L$ and $W$ are $N^{\Theta(1)}$; the exact values will be given later and are not important for our intuitive discussion here.}  Each blue vertex is assigned $d$ outneighbors by choosing each one uniformly at random from the blue vertices and the first half of the layers of the red vertices.  Each red vertex in layer $R_i$ ($i < L$) is assigned $d$ outneighbors by choosing each one uniformly from the $W$ vertices in $R_{i+1}.$  For a visual example, refer to \Cref{fig:BR-graph}. A straightforward probabilistic argument (given in \Cref{sec:far-from-acyclic}) shows that with probability $1-o_N(1)$ a random $\bG \sim \BR$ is $\eps$-far from acyclic, so the main challenge is to show that it is hard to find a directed cycle in a graph drawn from this distribution.

\begin{figure}[h!]
    \centering
    \includegraphics[]{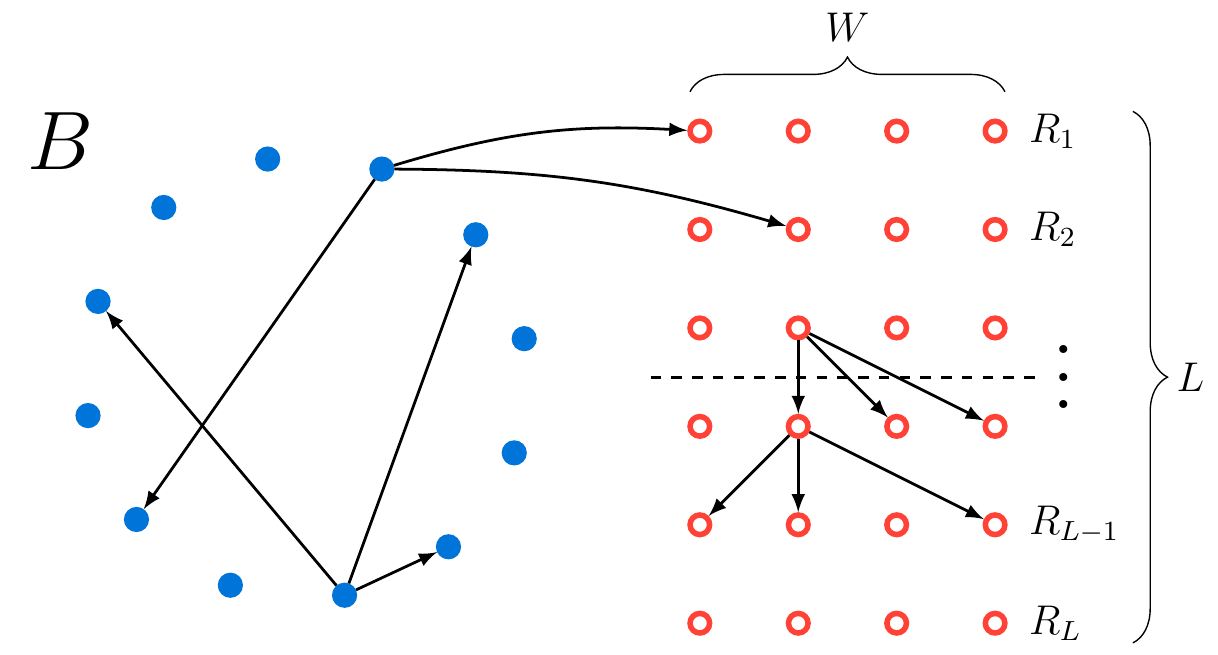}
    \caption{Cartoon of a random graph $\bG \sim \BR$.}
    \label{fig:BR-graph}
\end{figure}

We give some intuition behind the construction of graphs in $\BR$. Note that every cycle in $\bG$ consists entirely of blue vertices. Thus, a cycle-finding algorithm may want to ``avoid wandering into the red region.'' This, however, is difficult to do because the local neighborhood of a typical vertex ``looks the same'' whether it is blue or red (note that an algorithm under the adjacency list model of course never receives explicit information about whether any particular vertex is blue or red).  For example, the simple random walk approach sketched at the beginning of the previous subsection will not work for $\bG \sim \BR$: even if the random walk starts at a blue vertex, after $O(1)$ steps on average it will reach a red vertex and will have no chance of completing a cycle. Given that an algorithm needs $\Omega(N^{1/2})$ queries to find a cycle even if it is given the set of blue vertices (since the blue part of $\bG\sim\BR$ is very similar to graphs drawn from $\BR_{\simple}$ described in Section \ref{sec:techniques-BR}), it is natural to hope for an $\omega(N^{1/2})$ lower bound using the distribution $\BR$.  

There are two challenges in obtaining an $\omega(N^{1/2})$ lower bound using $\BR$. First, as discussed in the previous subsection (which applies not only to $\BR$ but to any distribution), an $\omega(N^{1/2})$-query algorithm may experience many collisions and hence potentially obtain a significant amount of information about $\bG$. The second challenge is specific to $\BR$. Despite the intuition, ``wandering into the red region'' may actually provide useful information about $\bG$ when done strategically (see \Cref{sec:sketch-algorithm} for two attacks on $\BR$ based on exploring the red region; they together imply that one cannot hope to obtain a lower bound better than $N^{13/18}$ using $\BR$). Given that many algorithmic strategies are possible, how can one argue that \emph{every} algorithm that does not make too many queries is unlikely to find a cycle?

To explain the intuition that underlies our lower bound, we first note that for a query on vertex $u$ to reveal a cycle, it must be the case that $u$ is blue and there is a directed path from one of its outneighbors to $u$ in the current ``knowledge graph'' of the algorithm (where the knowledge graph consists of all edges that have been found so far and $v$ is an ancestor of $u$ if it has a directed path to $u$). As a result, we focus on the maximum number of ancestors among all blue vertices in the current knowledge graph because the probability that an algorithm discovers a cycle when it queries a vertex is proportional to its number of ancestors in the knowledge graph. Our proof, at a high level, shows that this crucial quantity cannot grow too fast.

A key notion behind our analysis is the division of a sequence of queries made by an algorithm into distinct \emph{epochs}. Roughly speaking, an epoch ends either when a collision occurs (i.e., one of the outneighbors of the vertex queried is a vertex that the algorithm has seen before, either as a query vertex or as an outneighbor of a query vertex), or when ``too many'' queries have been made since the end of the previous epoch. We introduce the notion of epochs in \Cref{epochssec} and bound the number of epochs that occur in the execution of any algorithm that makes at most $Q^*$ queries (\Cref{lemma:epochbound}). We also bound the number of \emph{blue surprise epochs}: these are epochs that end because the vertex $u$ queried is blue and has a blue outneighbor $v$ that the algorithm has seen before (\Cref{lemma:blueepochbound}). We pay special attention to such epochs because with the discovery of $(u,v)$, all ancestors of $u$ become ancestors of $v$ and thus, the number of ancestors of $v$ may grow rapidly.

Next, in~\Cref{sec:no-long-blue-path} we show that during an epoch of any algorithm, regardless of outcomes of previous epochs, the vertices queried are unlikely to contain a path of blue vertices of length more than $4 \log N$. This is captured in \Cref{maintechnicallemma}, which is at the heart of our lower bound argument. In particular, \Cref{maintechnicallemma} implies that during an epoch that is not a blue epoch (or during a blue surprise epoch but ignoring the blue-blue collision edge found at the end of the epoch), the maximum number of ancestors of blue vertices in the knowledge graph can increase by no more than $4\log N$.

Finally, in~\Cref{sec:cycle-finding}, we combine Lemmas \ref{lemma:epochbound}, \ref{lemma:blueepochbound}, and \ref{maintechnicallemma}  to bound the maximum number of ancestors of blue vertices in its knowledge graph during the execution of any $\UB^*$-query algorithm on $\bG\sim \BR$.
This is used to show that every such algorithm finds a cycle in $\bG\sim \BR$ with probability $o_N(1)$. 
 
To simplify the presentation, we introduce in \Cref{sec:model} an augmented query model, called the \emph{color revelation model}, in which more information is provided to the algorithm than in the standard model.  Specifically, at the end of each epoch the query algorithm is provided with the color of every vertex it has previously seen. All our results discussed above are proved under this model and our lower bound trivially carries over to the standard model since any algorithm under the latter can be simulated under the color revelation model by simply ignoring the additional information.

\section{The Bender-Ron graphs} \label{sec:procedure}
 
In this section we formally describe the distribution $\BR := \BR(N,d)$ and prove, in \Cref{sec:far-from-acyclic}, that $\bG\sim \BR$ is $\MaxEpsilon$-far from acyclic with probability $1-o_N(1)$ when $d\ge \MinDegree$. \Cref{thm:main} follows from the next theorem which we prove in the rest of the paper.

\begin{theorem}\label{secondmaintheorem}
Let $d$ be a constant with $d\ge \MinDegree$.
Let $\alga$ be any $Q^*$-query deterministic algorithm that operates on graphs in the support of $\BR$
  under the vertex-query model, where $Q^* :=N^{5/9}/\log(N)$. 
Then the probability of $\alga$ finding a cycle in $\bG\sim \BR$ is  $ o_N(1)$.
\end{theorem}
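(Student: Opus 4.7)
My plan is to work in the color-revelation model of \Cref{sec:model} and analyze the \emph{knowledge graph} $K_t$ that $\alga$ has accumulated after $t$ queries.  The key observation is that every directed cycle in $\bG\sim\BR$ lies entirely among blue vertices (the red layers are strictly stratified), so $\alga$ discovers a cycle at step $t{+}1$ only if at least one of the $d$ outneighbors of its query $u_{t+1}$ is a blue ancestor of $u_{t+1}$ in $K_t$.  Consequently the quantity to track is
\[
M_t \;:=\; \max_{v \text{ blue in } K_t}\;\bigl|\BlueAnc_{K_t}(v)\bigr|,
\]
the largest number of blue ancestors of any blue vertex currently in $K_t$.  Once I control $M_t$, a standard union-bound computation closes the theorem.

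\textbf{From $M_t$ to a cycle-probability bound.}  Conditioned on the information the algorithm has so far (which, in the color-revelation model, includes colors at the end of each completed epoch), each outneighbor of an as-yet-unqueried vertex is approximately uniform over an $\Theta(N)$-size pool (blue vertices plus the first $L/2$ red layers for a blue query vertex, $W$ vertices in the next red layer for a red query vertex).  Thus the probability that any fixed blue ancestor of $u_{t+1}$ appears among its $d$ outneighbors is $O(d/N)$, so query $t{+}1$ yields a cycle with probability $O(dM_t/N)$.  Summing over $\UB^*$ queries reduces \Cref{secondmaintheorem} to establishing that, with probability $1-o_N(1)$,
\[
M_{\UB^*} \;=\; o\!\left(\frac{N}{\UB^* d}\right)\;=\;o\bigl(N^{4/9}\log N\bigr).
\]

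\textbf{Controlling $M_t$ through epochs.}  I would bound $M_t$ inductively across epoch boundaries, invoking the three lemmas emphasised in \Cref{sec:techniques}.  \Cref{maintechnicallemma} asserts that within a single epoch no blue path of length exceeding $4\log N$ is revealed, so an ordinary epoch (one ending either by length or by a collision that is \emph{not} blue-blue) increases $M_t$ by an additive $O(\log N)$.  A blue surprise epoch closes with a freshly discovered edge $(u,v)$ between two previously seen blue vertices, at which point $v$ and all its blue descendants in $K_t$ inherit $u$'s blue ancestors; this transition yields at worst $M\mapsto 2M+O(\log N)$.  Writing $E$ for the total number of epochs and $E_b$ for the number of blue surprise epochs, iterating the two recurrences gives
\[
M_{\UB^*}\;\le\;2^{E_b}\!\bigl(M_0 + O(E\log N)\bigr),
\]
and substituting the quantitative bounds on $E$ and $E_b$ supplied by \Cref{lemma:epochbound} and \Cref{lemma:blueepochbound} yields $M_{\UB^*}=\tilde O(N^{4/9})$, which is exactly what the reduction above demands.

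\textbf{Expected main obstacle.}  The hardest part I anticipate is the careful conditioning underlying the per-step cycle-probability bound $O(dM_t/N)$.  The color-revelation model is introduced precisely to make conditioning clean \emph{between} epochs—once an epoch ends, everything observed is laid bare—but one must still argue that \emph{within} an epoch, where no collision has yet been revealed, each new outneighbor is still close to uniform over the remaining pool of unseen vertices of the appropriate colour/layer.  Making this rigorous while simultaneously bookkeeping the epoch structure and the joint growth of $M_t$ is where the bulk of the remaining technical work lies; once it is in place, combining it with the three epoch lemmas and the union bound above yields cycle-finding probability at most $\UB^*\cdot O(d M_{\UB^*}/N)=o_N(1)$ and completes the proof of \Cref{secondmaintheorem}.
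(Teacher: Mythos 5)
Your overall architecture matches the paper's: pass to the color revelation model, track the maximum number of ancestors of any blue vertex in the knowledge graph, bound the per-query probability of closing a cycle by $O(dM_t/N)$, and feed in Lemmas \ref{lemma:epochbound}, \ref{lemma:blueepochbound} and \ref{maintechnicallemma}. The per-step reduction and the additive $O(\log N)$ growth of $M_t$ in ordinary epochs are both correct and are exactly what the paper does.

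The genuine gap is in how you handle blue surprise epochs. Your recurrence $M\mapsto 2M+O(\log N)$ per blue surprise epoch gives $M_{\UB^*}\le 2^{E_b}\bigl(M_0+O(E\log N)\bigr)$, and with $E_b=\Theta(Q^{*2}/N)=\Theta(N^{1/9}/\log^2 N)$ the factor $2^{E_b}$ is superpolynomial in $N$, so the claimed conclusion $M_{\UB^*}=\tilde{O}(N^{4/9})$ does not follow from that recurrence; the bound it actually yields is useless. The paper avoids this exponential blow-up with a global structural argument (\Cref{lowanc}): on a typical run, the blue knowledge graph at any time is the union of (a) a forest of out-trees whose depth is $O\bigl(\log N\cdot(Q^{*2}/W+Q^*/L)\bigr)$, since depth grows additively by $4\log N$ per epoch by \Cref{maintechnicallemma}, and (b) at most $O(Q^{*2}/N)$ surprise edges in total. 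Because every vertex of the forest has indegree at most one, deleting the surprise edges from the in-tree of ancestors of any vertex $u$ leaves a vertex-disjoint union of directed paths, each of length at most the depth bound, and the number of such paths exceeds the number of surprise edges by at most one. Hence the ancestor count is bounded by the \emph{product} $O(E_b)\cdot O(E\log N)=O(N^{4/9}/\log^2 N)$ rather than by $2^{E_b}$ times anything. You need this (or an equivalent global/amortized argument) in place of the per-epoch doubling recurrence; with it, the remainder of your outline goes through as in the paper.
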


\subsection{The distribution} \label{sec:thedist}

Let $W := 2N/L = (2N)^{7/9}$ and $L:=(2N)^{2/9}$ be two parameters indicating the width of each red layer and the number of red layers, respectively.\footnote{Note that by our choices of $L$ and $W$, $N+LW = 3N$. This particular setting of $L$ and $W$ is chosen to optimize our lower bound, as will become clear in the course of our analysis.  We  assume without loss of generality that $N$ is such that both $L/2$ and $W$ are integers.}
We refer to a map from a subset of $[3N]$ to $L+1$ colors $\{\sb,\sr_1,\ldots,\sr_L\}$
  as a \emph{coloring}.

A digraph $\bG\sim \BR$ over the vertex set $[3N]$ is generated by the following randomized procedure:
\begin{flushleft}\begin{enumerate}
    \item 
    Let $\bbU$ be the uniform distribution over all colorings 
      $C:[3N]\rightarrow \{\sb,\sr_1,\ldots,\sr_L\}$ such that 
      $N$ vertices are colored $\sb$ and $W$ vertices are colored $\sr_i$ for 
      each $i\in [L]$. The procedure starts by drawing a coloring $\bC\sim \bbU$.
Naturally we refer to vertices in $\bB$ as blue vertices and 
  vertices in $\bR_1\cup \cdots\cup \bR_L$ as red vertices in $\bC$.
We view $\bR_1,\ldots,\bR_L$ as $L$ layers of red vertices and refer
  to vertices in $\bR_i$ as red vertices in the $i^{th}$ layer (see \Cref{fig:BR-graph}).

\item For each blue vertex $u\in \bB$, 
  create its adjacency list by drawing a sequence of $d$
  vertices \emph{without} replacement from the following set of $(N-1)+LW/2=2N-1$ vertices:
    \begin{equation}
    \big(\bB\setminus \{u\}\big) \cup \bigcup_{i=1}^{L/2} \bR_i.
    \end{equation}
Thus, a blue vertex has $d$ distinct outneighbors from
  $\bB$ and the top $L/2$ layers of red vertices.
    \item For each red vertex in $\bR_i$, $1 \leq i < L$, create its adjacency list by drawing a sequence of $d$ vertices
    without replacement from $\bR_{i+1}$.
Thus, each red vertex (other than those in the bottom layer $\bR_L$)
  has $d$ distinct outneighbors in the next layer.
Finally, set the adjacency list of each vertex in $\bR_L$
  to be empty.
This finishes the construction of $\bG$. 
Note that every vertex in $\bG$ has out-degree either $d$ or $0$ so
  $\bG$ is a bounded-outdegree-$d$ digraph as promised.
  
\end{enumerate}\end{flushleft}
We refer to graphs in the support of $\BR$ as \br{} graphs, since these graphs are inspired by a construction that was proposed (but not analyzed) in \cite{BR02}.
\Cref{fig:BR-graph} illustrates a graph in $\mathbb{BR}$.
To facilitate our proof of \Cref{secondmaintheorem} later, 
  in addition we introduce $\BR^*$ to denote the distribution
  of $(\bC,\bG)$ generated by the procedure above (so the marginal distribution
  of $\bG$ in $\BR^*$ is the same as $\BR$).

We record the following property that is trivial from the construction:

\begin{property}\label{basicproperty}
Let $(C,G)$ be a pair in the support of $\BR^*$ and let
  $(u,v)$ be an edge in $G$.
Then either (1) $C(u)=C(v)=\sb$ (a $\sb \rightarrow\sb$ edge), (2) $C(u)=\sb$ and $C(v)=\sr_i$ for some $i\le L/2$
(a $\sb \rightarrow\sr$ edge), or (3) $C(u)=\sr_i$ and $C(v)=\sr_{i+1}$
  for some $i<L$ (a $\sr\rightarrow\sr$ edge).

Moreover, if a vertex $u$ has no outneighbor, then we must have $C(u)=\sr_L$.
\end{property}

\subsection{Almost all \br{} graphs are far from acyclic} \label{sec:far-from-acyclic}

It is clear from the construction that no red vertex can participate in a cycle, but intuitively the $\sb \rightarrow\sb$ edges will result in many cycles
  in the blue part of the graph. ~\Cref{hehelemma} below makes this intuition precise. 

\begin{lemma}[$\BR$-graphs are far from acyclic]\label{hehelemma}
Let $d \geq \MinDegree$ be a constant. Then a random digraph $\bG\sim \BR$ is $\MaxEpsilon$-far from acyclic with probability $1-o_N(1)$.
\end{lemma}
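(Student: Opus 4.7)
My plan is to reduce the feedback-arc-set lower bound on $\bG$ to a balanced-cut lower bound on the blue induced subgraph, and then establish the latter by a Chernoff-plus-union-bound argument over partitions of $\bB$.

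The reduction to $\bG_B := \bG[\bB]$ is based entirely on \Cref{basicproperty}: every edge of $\bG$ is $\sb\to\sb$, $\sb\to\sr$, or $\sr_i\to\sr_{i+1}$, so no edge ever enters a blue vertex from a red vertex and any walk that visits a red vertex strictly increases the layer index and can never return to blue. Hence every cycle of $\bG$ lies entirely inside $\bG_B$, and for any feedback arc set $S$ of $\bG$ the restriction $S\cap E(\bG_B)$ is already a feedback arc set of $\bG_B$. This gives $\mu(\bG)\ge\mu(\bG_B)$, where $\mu(\cdot)$ denotes the minimum feedback arc set size, so it suffices to show $\mu(\bG_B)\ge \MaxEpsilon\cdot d\cdot 3N = dN/20$ with probability $1-o_N(1)$.

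To lower-bound $\mu(\bG_B)$, I would observe that for any topological ordering $\pi$ of the $N$ blue vertices, the middle cut $V_1:=\pi^{-1}(\{1,\dots,N/2\})$, $V_2:=\pi^{-1}(\{N/2+1,\dots,N\})$ gives a balanced partition of $\bB$, and every edge from $V_2$ to $V_1$ in $\bG_B$ is a backward edge under $\pi$. Hence
$$\mu(\bG_B)\ \ge\ \min_{\substack{V_1\sqcup V_2=\bB\\|V_1|=|V_2|=N/2}} \bigl|E_{\bG_B}(V_2,V_1)\bigr|,$$
and it suffices to show that every balanced partition of $\bB$ sends at least $dN/20$ blue-blue edges from $V_2$ to $V_1$.

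For a fixed coloring $\bC$ and a fixed balanced partition $(V_1,V_2)$ of $\bB$, the count $|E_{\bG_B}(V_2,V_1)|$ is a sum of $dN/2$ Bernoulli indicators (one per outneighbor slot of each $u\in V_2$), each with success probability $(N/2)/(2N{-}1)$, that are independent across distinct source vertices and negatively associated within each source (by without-replacement sampling from $(\bB\setminus\{u\})\cup\bR_{\le L/2}$, a universe of size $2N{-}1$ containing $V_1$ in full). The expectation is at least $dN/8$, so a standard multiplicative Chernoff bound for sums of negatively associated Bernoullis yields
$$\Pr\bigl[|E_{\bG_B}(V_2,V_1)|<dN/20\bigr]\ \le\ e^{-cdN}$$
for an explicit absolute constant $c>0$ (working out to $c=9/400$ with relative deviation $\delta=3/5$). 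Union-bounding over the $\binom{N}{N/2}\le 2^N$ balanced partitions of $\bB$ gives total failure probability $e^{(\ln 2 - cd)N}=o_N(1)$ as long as $cd>\ln 2$, which holds with room to spare whenever $d\ge\MinDegree$. Since the bound is conditional on every coloring, it holds unconditionally.

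\emph{Main obstacle.} The delicate point is the trade-off between the Chernoff tail $e^{-\Theta(dN)}$ and the $\binom{N}{N/2}\asymp 2^N$ union bound over balanced partitions, which forces $d$ to exceed a constant threshold; this is exactly the role of the hypothesis $d\ge\MinDegree$. Attempting the analogous argument at smaller prefix cuts (position $k\ll N$) does not help, since the expected cross count $\Theta(dk)$ is too small to survive the $\binom{N}{k}$ union bound for small $k$. The argument genuinely needs the middle (or at least a constant-fraction) cut, together with $d$ large enough to make the Chernoff exponent beat the entropy of a balanced subset.
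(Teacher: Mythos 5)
Your proposal is correct and follows essentially the same route as the paper's proof: condition on the coloring, reduce to the blue induced subgraph via the observation that all cycles are all-blue, lower-bound the minimum feedback arc set by the minimum balanced-cut count over orderings, and finish with a Chernoff bound (you via negative association, the paper via sequential conditioning and coupling) plus a union bound over the $\le 2^N$ balanced partitions, with $d\ge 80$ making the exponent beat $\ln 2$.
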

\begin{proof}
It suffices to show that for any fixed coloring $C$,
  the random graph $\bG$ drawn using the same procedure running on $C$
  is far from acyclic with high probability.
To this end, we assume without loss of generality that the blue
  vertices in $C$ are $[N]$. We focus on the subgraph of $\bG$
  induced by the blue vertices $[N]$, which we refer to as the \emph{blue subgraph}.

The following claim is folklore and we include its proof for completeness:

\begin{claim}
An $N$-vertex digraph $G=(V,E)$ is $\eps$-far from acyclic if and only if for every \emph{(}bijective\emph{)} vertex ordering $\pi: V \to \{1,\dotsc,N\}$, the number of ``backedges'' \emph{(}i.e. directed edges $(u,v)$ such that $\pi(u) > \pi(v)$\emph{)} is at least $\eps dN$. 
\label{claim-ordering}
\end{claim}
\begin{proof}
We prove the contrapositive in both directions: ($\Rightarrow$) Deleting all the backedges leaves an acyclic graph, showing that the graph is $\eps$-close to acyclic. ($\Leftarrow$) Given a feedback arc set, after deleting it we can find a topological sort of the resulting acyclic graph. The ordering resulting from the topological sort has exactly the feedback arc set as its backedges. 
\end{proof}

We will use the following claim, which follows trivially from \Cref{claim-ordering}, to bound the distance to acyclicity
  of the blue subgraph of $\bG$: 

\begin{claim}\label{hahaclaim}
Let $G=(V,E)$ be an $N$-vertex digraph.
Suppose that for all balanced partitions $(V_1,V_2)$ of $V$,
  the number of directed edges from $V_1$ to $V_2$ is at least $\eps d N$,
  then $G$ is $\eps$-far from acyclic.
\end{claim}
\begin{proof}
Every ordering of vertices $\pi$ induces a balanced partition $(V_1,V_2)$
  by taking $V_2$ as the first $N/2$ vertices in $\pi$ and $V_1$ as
  the last $N/2$ vertices in $\pi$.
Then all edges from $V_1$ to $V_2$ are backedges with respect to $\pi$. The result follows from \Cref{claim-ordering}.
\end{proof}

Fix a balanced partition $(V_1,V_2)$ of the blue vertices $[N]$.
We show below that the number of edges from $V_1$ to $V_2$ in $\bG$ is
  at least $dN/20$ with probability $1-\exp({- N})$.
It follows from a union bound over all balanced partitions
  that with  probability $1-o_N(1)$, the number of edges one needs to delete to make $\bG$
  acyclic is at least $dN/20$.

To bound the number of edges in $\bG$ from $V_1$ to $V_2$,
  we go through vertices in $V_1$ one by one and for each
  vertex $u\in V_1$, draw a sequence of $d$ outneighbors without 
  replacement from a set of $2N-1$ vertices which contains $V_2$.
For each of these $dN/2$ many rounds and for any outcomes in previous rounds,
  the probability of gaining a directed edge from $V_1$ to $V_2$
  is at least
$$
\frac{(N/2)-(d-1)}{2N-1-(d-1)}
\ge \frac{1}{5}
$$
when $N$ is sufficiently large, so the expected number of edges is at least $(dN/2)\cdot (1/5)=dN/10$. It follows from a Chernoff bound (and a standard coupling argument) that the probability of having fewer than $dN/20$ edges is at most
$$\exp\big({-(dN/10)(1/2)^2(1/2)}\big) = \exp(-dN/80)\le \exp(- N),$$
when $d\ge 80$.
With a union bound over the at most $2^N$ many balanced partitions,
  we conclude~that $\bG$ has at least $dN/20$ edges from $V_1$ to $V_2$ in all balanced
  partitions $(V_1,V_2)$ of $[N]$ with probability at least
  $1-\exp(-N)\cdot 2^N=1-o_N(1)$.
Thus the blue subgraph is $(1/20)$-far from acyclic with probability $1-o_N(1)$. Because the total number of vertices is $3N$, after lifting back to the original graph we have that $\bG$ is $(1/60)$-far from acyclic with probability $1-o_N(1)$.
\end{proof}  

\section{Epochs and color revelation}\label{epochssec}

The goal of the rest of the paper is to prove \Cref{secondmaintheorem}.
Recall that under the vertex query model,
  each time an algorithm queries a vertex $u\in [3N]$
  it receives as its answer an ordered list $a=(v_1,\ldots,v_\ell)$ containing the outneighbors of $u$. We assume without loss of generality that the algorithm never queries the same vertex twice.
For \br{} graphs in the support of $\BR$ we know that the answer to each query is either 
  an ordered list $(v_1,\ldots,v_d)$ of $d$ distinct vertices different from $u$
  or the empty list.
This leads to the following definition of \emph{query histories}.
  
\begin{defn}[Query histories]
A \emph{query history} $H$ is an ordered tuple $((u_1,a_1),\ldots,(u_q,a_q))$
  for some $q\ge 0$ 
  such that $u_1,\ldots,u_q$ are \emph{distinct} vertices in $[3N]$ and each $a_i$ is 
  either a list of $d$ distinct vertices different from $u_i$ 
  or the empty list.
We refer to $q$ as the length of $H$,
  and $H$ as the empty history when $q=0$.
\end{defn}  

Each query history $H=((u_1,a_1),\ldots,(u_q,a_q))$
uniquely determines a \emph{knowledge graph}, denoted $\KG(H)$, which summarizes the information about the underlying graph contained in $H$: The vertex set of $\KG(H)$, denoted 
  $\VKG(H)$, consists of all vertices that appear in $H$ (i.e., every $u_i$ and every vertex
  $v$ in $a_i$ for some $i\in [q]$);
$\KG(H)$ contains a directed edge $(u,v)$ if $u=u_i$ and $v$ appears in $a_i$ for some $i\in [q]$.
Note that each vertex in $\KG(H)$ has outdegree either $d$ or $0$, and 
  every vertex with outdegree $d$ is queried in $H$.
On the other hand, a vertex $u$ with outdegree $0$
  has two cases: Either $u$ is queried in $H$ and the answer $a$ is empty,
  in which case we refer to $u$ as a \emph{sink} in $\KG(H)$,
  or $u$ is discovered as an outneighbor of some vertex queried in $H$
  but itself is never queried in $H$.

To prove \Cref{secondmaintheorem}, we introduce the notion of \emph{epochs}  
  and a new query model called the \emph{color revelation model} in \Cref{sec:model}.
In addition to receiving the adjacency list of the vertex queried,
  an algorithm under the color revelation model receives additional 
  information about colors of vertices in the current knowledge graph 
  at the end of each epoch. In the rest of the paper we show that, under the color revelation model,
  any $Q^*$-query deterministic algorithm finds a cycle in $\bG\sim \BR$ with probability  $o_N(1)$ (see the exact statement in \Cref{thirdmaintheorem}).
\Cref{secondmaintheorem} follows from \Cref{thirdmaintheorem} trivially because 
  the color revelation model is no harder than the vertex query model: any algorithm
  under the vertex query model
  can be simulated under the color revelation model by simply ignoring the additional information.

\subsection{The color revelation model}\label{sec:model}

Let $H=((u_1,a_1),\ldots,(u_q,a_q))$ be a query history for some $q\ge 0$; we write $H_i$ to denote its $i$-prefix $((u_1,a_1),\ldots,(u_i,a_i))$. We say the $k^{th}$ query $(u_k,a_k)$ is a \emph{surprise} in $H$ if $a_k$ contains a vertex that appears in $\VKG(H_{k-1})$. Otherwise, we refer to $(u_k,a_k)$ as \emph{surprise-free}.
 
We now describe the color revelation model, which provides additional power to the query algorithm by revealing the colors of vertices in previous epochs for free. Although this augmentation makes the task of cycle-finding easier, it also makes it easier to prove lower bounds. Formally, the oracle now contains a pair $(C,G)$ in the support of $\BR^*$, instead of just a \br{} graph $G$ as in the vertex-query model. The oracle uses $C$ to reveal to the algorithm colors of certain vertices. (In general, a coloring $C$ is not uniquely determined by a \br{} graph $G$.)  
  
Under the color revelation model, an algorithm $\alga$ maintains a triple $(H,\calE,P)$,
  where
\begin{flushleft}\begin{enumerate}
    \item $H$ is the current query history, updated after each query as in the vertex-query model;
  \item $\calE=(E_1,\ldots,E_\ell)$ for some $\ell\ge 1$ is a decomposition of $H$ into \emph{epochs}, where
  each epoch $E_i$ is by itself a query history and $H=E_1\circ \cdots \circ E_\ell$; and
  \item Letting  
  $H'=E_1\circ\cdots \circ E_{\ell-1}$, $P$ is a coloring map from $\VKG(H')$ to $\{\sb, \sr_1,\ldots,\sr_L\}$.
 \end{enumerate}\end{flushleft} 

Initially, $H$ and $E_1$ are empty and $\calE = (E_1)$. We refer to the final epoch $E_l$ as the \emph{current} epoch. For clarity, we use the symbols $P$ and $S$ to denote
  partial colorings over subsets of $[3N]$ and use $C$ to denote a full coloring
  over the vertex set $[3N]$.

Let $(H,\calE, P)$ denote the current triple maintained by an algorithm $\alga$. Under the color revelation model, the next round proceeds as follows:
\begin{flushleft}\begin{enumerate}
    \item As in the vertex query model, $\alga$ queries a vertex $u$, receives an ordered list $a$ containing the outneighbors of $u$ in $G$, and concatenates $(u,a)$ to $H$ and $E_\ell$.
    \item The current epoch \emph{ends} if $(u,a)$ is a surprise in $H$
    or $|E_\ell| = L/2$. In this case:
 \begin{enumerate}
       \item $\alga$ learns the colors of the vertices in the current epoch: $P$ is extended so that $P(u)=C(u)$ for every $u\in \VKG(H)$.
  \item A new epoch begins: An empty epoch $E_{\ell+1}$ is appended to $\calE$. 
   \end{enumerate}
\end{enumerate}\end{flushleft}

$\calE$ is can be reconstructed from $H$ by reading $H$ serially and recording the end of an epoch if a surprise occurs or the length of the epoch reaches $L/2$. Thus $\alga$ needs only to maintain the pair $(H,P)$ instead of the triple $(H,\calE,P)$. We refer to
  $\calE$ as the \emph{epoch decomposition} of $H$.

Next we introduce the notion of \emph{valid knowledge pairs}. 

\begin{defn}[Valid knowledge pairs]
A pair $(H,P)$ is called a \emph{valid knowledge pair} if
\begin{flushleft}\begin{itemize}
\item $H=((u_1,a_1),\ldots,(u_q,a_q))$ is a query
  history for~some $q\ge 0$ and $P$ is a coloring map over $\VKG(H')$, where $\calE=(E_1,\ldots,E_\ell)$
  is the epoch decomposition of $H$ and $H'=E_1\circ \cdots \circ E_{\ell-1}$;
\item There exists a pair $(C,G)$ in the support of $\BR^*$ such that 
  $C$ is an extension of $P$ and 
  $G$ is \emph{consistent} with $H$, i.e., $a_i$ is the adjacency list of $u_i$
  in $G$ for every $i\in [q]$.
\end{itemize}\end{flushleft}

 Given a valid knowledge 
   pair $(H,P)$ we use $\BR^*(H,P)$ to denote the distribution of 
  $(\bC,\bG)\sim\BR^*$ conditioning on $\bC$ being an extension of $P$
  and $\bG$ being consistent with $H$.
\end{defn}

Note that the pair $(H,P)$ maintained by an algorithm under the color revelation model is always valid by definition. From now on we consider a deterministic query algorithm $\alga$ under the color revelation model as a map from valid knowledge pairs to vertices so that $u=A(H,P)$ is the next vertex that is queried. \Cref{secondmaintheorem} follows directly from the following statement in the color revelation model:
  
\begin{restatable}{theorem}{mainmain}
 \label{thirdmaintheorem}
Let $d$ be a constant with $d\geq\MinDegree$, and let $\alga$ be a  $Q^* $-query deterministic algorithm that works on pairs in the support of $\BR ^*$ under the color revelation model, where $Q^* =N^{5/9}/\log N$. Then the probability of $\alga$ finding a cycle in $(\bC,\bG)\sim \BR ^*$ is $o_N(1)$.
\end{restatable}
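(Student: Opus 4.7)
My plan is to combine the three structural lemmas promised in the outline---\Cref{lemma:epochbound} (few epochs), \Cref{lemma:blueepochbound} (few blue surprise epochs), and \Cref{maintechnicallemma} (short blue paths inside any one epoch)---into a single bound on the maximum blue-ancestor count in the knowledge graph, and then convert that bound into a per-query probability of cycle discovery. For each valid knowledge pair $(H,P)$ arising during the run of $\alga$ on $(\bC,\bG)\sim \BR^*$, I would define
\[
A(H,P) \;:=\; \max_{v}\,\#\bigl\{\, w\in\VKG(H) : w \text{ is blue and has a directed path to } v \text{ in } \KG(H)\,\bigr\},
\]
where the maximum is over blue $v$ (colors inherited from $P$ and, for the current epoch, from $\bC$). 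The scalar $A$ is the potential I would track.

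The first step is a per-query cycle bound. By \Cref{basicproperty}, every cycle in $\bG$ is monochromatic blue, so a query at $u$ closes a cycle in $\KG$ only if $u$ is blue and one of its $d$ outneighbors is a blue ancestor of $u$ in the prior knowledge graph. Exploiting the symmetry of $\BR^*(H,P)$---the distribution of $\bC$ on unrevealed vertices is uniform over valid extensions, and outneighbor lists of blue vertices are uniform without replacement from a pool of size $2N-1$---I would show that, conditioned on $(H,P)$, each given blue ancestor of $u$ lies among $u$'s outneighbors with probability $\le d/(2N-O(1))$, so the conditional cycle probability at that query is $O(A(H,P)\,d/N)$. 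Summing over $Q^*$ rounds bounds the total cycle-finding probability by $O(Q^* A_{\max} d/N)$, which is $o(1)$ provided $A_{\max}=o(N^{4/9}\log N)$.

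The second step is to control $A_{\max}$ with high probability over $(\bC,\bG)$. \Cref{maintechnicallemma} says that during any single epoch the revealed queries contain no blue path longer than $4\log N$, so a non-blue-surprise epoch can grow $A$ additively by at most $O(\log N)$: any new ancestor chain created in that epoch passes through at most that many newly-revealed blue vertices. A blue surprise epoch, however, ends with an edge $(u,v)$ whose endpoints are both already-seen blue vertices, causing $v$ and all blue descendants of $v$ to inherit every blue ancestor of $u$; this can at worst double $A_{\max}$. Writing $E$ for the number of epochs and $B$ for the number of blue surprise epochs in a given run, an induction over epochs yields
\[
A_{\max} \;\le\; O(\log N)\cdot E\cdot 2^{B},
\]
and substituting the high-probability upper bounds on $E$ and $B$ from \Cref{lemma:epochbound} and \Cref{lemma:blueepochbound} gives $A_{\max}=o(N^{4/9}\log N)$ under the chosen parameters $Q^*=N^{5/9}/\log N$, $L=(2N)^{2/9}$, $W=(2N)^{7/9}$.

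The most delicate point, I expect, is the per-query cycle bound, since $(H,P)$ encodes a nontrivial amount of information about $(\bC,\bG)$---not just the edges witnessed in $H$ and the colors assigned in $P$, but crucially the \emph{absence} of collisions that would have ended earlier epochs prematurely. I would handle this via an exchangeability argument: conditional on $(H,P)$, the color permutation on vertices missing from $\VKG(H)$ is uniform over valid extensions, and conditional further on a full coloring, the unobserved portions of each adjacency list are uniform over legal completions. A secondary subtlety is the current, possibly unfinished epoch, whose vertex colors are not yet in $P$; I would absorb within-epoch-revealed blue vertices directly into the definition of $A(H,P)$ and observe that \Cref{maintechnicallemma} controls the last epoch uniformly alongside the earlier ones.
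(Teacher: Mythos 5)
Your overall architecture matches the paper's: a per-query cycle probability of $O(d\cdot A/N)$ driven by the maximum blue-ancestor count $A$, combined with high-probability bounds on the number of epochs, blue surprise epochs, and within-epoch blue path lengths. The per-query bound and the conditioning concerns you flag are handled essentially as you describe. But there is a genuine gap in your second step: the recursion ``each blue surprise epoch at worst doubles $A_{\max}$'' yields $A_{\max}\le O(\log N)\cdot E\cdot 2^{B}$, and this bound is vacuous. \Cref{lemma:blueepochbound} only guarantees $B=O({Q^*}^2/N)=O(N^{1/9}/\log^2 N)$ with high probability, and indeed the expected number of blue surprises is $\Theta({Q^*}^2/N)\to\infty$, so $2^{B}$ is superpolynomial in $N$. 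Your claimed conclusion $A_{\max}=o(N^{4/9}\log N)$ does not follow from this induction; the doubling accounting is exponentially too lossy.

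The paper avoids this by a structural (not inductive) argument. At the end of each epoch the blue knowledge graph grows by a collection of out-trees of depth at most $4\log N$ (by \Cref{maintechnicallemma}), plus at most $d$ ``surprise edges'' per blue surprise epoch. Hence after deleting the surprise edges, the blue knowledge graph is a forest of out-trees in which every vertex has indegree at most one and every root-to-leaf path has length $O(\log N\cdot E)$. Now fix a blue vertex $u$ and consider an in-tree on $\anc(H,u)$: removing the at most $O(B)$ surprise edges leaves a \emph{vertex-disjoint union of directed paths}, each of length at most the depth bound. This gives the product bound
\begin{equation*}
A_{\max}\;=\;O\!\left(\log N\cdot\Big(\tfrac{{Q^*}^2}{W}+\tfrac{Q^*}{L}\Big)\cdot \tfrac{{Q^*}^2}{N}\right)\;=\;O\!\left(\tfrac{N^{4/9}}{\log^2 N}\right),
\end{equation*}
i.e., linear rather than exponential in $B$, which is exactly what is needed to make $Q^*\cdot A_{\max}\cdot d/N=o_N(1)$. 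To repair your proof you should replace the doubling induction with this path-decomposition count (and note that the current unfinished epoch contributes only an additive $4\log N$).
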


\subsection{Epoch bounds}\label{sec:epoch-bound}

Let $(H,P)$ be a valid knowledge pair and let  
  $\calE=(E_1,\ldots,E_\ell)$ be the epoch decomposition of the query history $H$.
We refer to an epoch $E_i$, $i<\ell$, as a \emph{surprise epoch}
  if its last query is a surprise in $H$;
  otherwise $E_i$ has length $L/2$ and ends 
  by timeout.
A surprise epoch $E_i$ is a \emph{blue surprise epoch} if the last 
  vertex queried in $E_i$ is blue in $P$.
  
We begin our proof of \Cref{thirdmaintheorem} by proving upper bounds on the number of epochs and blue surprise 
  epochs that occur during the execution of a $Q$-query algorithm under the color revelation model.

\begin{lemma}[Epoch bound] \label{lemma:epochbound}
    There exists a constant $c_1$ such that for any algorithm that makes $Q$ queries, we have
    \begin{equation}
        \Prx_{(\bC,\bG) \sim \BR^*}\left[\text{~more than~} c_1\left(\frac{Q^2}{W} + \frac{Q}{L}\right) \text{~epochs occur~}\right] \leq \exp\left(-\Omega\left(\frac{Q^2}{W}\right)\right).
    \end{equation}
\end{lemma}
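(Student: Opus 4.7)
The plan is to split the count of epochs into \emph{timeout epochs} (those ending because $|E_\ell| = L/2$) and \emph{surprise epochs} (those ending because the last query is a surprise). Each timeout epoch consumes exactly $L/2$ queries, so the number of timeout epochs in any run of $Q$ queries is at most $2Q/L$ deterministically, which already absorbs into the $Q/L$ term of the claimed bound. The entire content of the lemma is therefore a concentration statement on the number of surprise epochs, which equals $\sum_{k=1}^{Q} Y_k$ where $Y_k$ is the indicator that the $k$-th query $(u_k,a_k)$ is a surprise in the query history.

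The key step is the per-query inequality
\[
\Prx\bigl[\,Y_k = 1 \;\big|\; \mathcal{F}_{k-1}\,\bigr] \;\le\; \frac{c\,Q}{W}
\]
for some constant $c = c(d)$, where $\mathcal{F}_{k-1}$ denotes all randomness observed by the algorithm before its $k$-th query. To prove this, I would condition on a fixed coloring $\bC$ together with the history $H_{k-1}$: by the construction in \Cref{sec:thedist}, the adjacency lists of distinct vertices are drawn independently given $\bC$, and since $u_k$ has not itself been queried (by our WLOG assumption), its adjacency list is still a fresh uniform without-replacement sample from the candidate set dictated by $C(u_k)$. In each of the cases $C(u_k) = \sb$, $C(u_k) = \sr_i$ with $i<L$, or $C(u_k)=\sr_L$, the candidate set either has size at least $W$ or is empty. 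Since $|\VKG(H_{k-1})| \le (d+1)(k-1) \le (d+1)Q$, a union bound over the $d$ slots and over vertices in $\VKG(H_{k-1})$ yields surprise probability at most $d(d+1)Q/(W-d) \le cQ/W$. Because this bound is pointwise in $\bC$, it survives integrating out the unrevealed portion of $\bC$ and so holds against $\mathcal{F}_{k-1}$.

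Given this uniform conditional bound, $\sum_{k=1}^{Q} Y_k$ is stochastically dominated by a $\mathrm{Binomial}(Q,\,cQ/W)$ random variable (I would formalize this by constructing an auxiliary sequence of independent $\mathrm{Bernoulli}(cQ/W)$ coins and coupling), whose mean is $cQ^2/W$. A standard multiplicative Chernoff bound then yields
\[
\Prx\!\left[\,\sum_{k=1}^Q Y_k \ge \frac{2cQ^2}{W}\,\right] \;\le\; \exp\!\left(-\Omega\!\left(\frac{Q^2}{W}\right)\right).
\]
Adding the deterministic $2Q/L$ timeout contribution and choosing $c_1$ large enough completes the proof.

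The main obstacle is conceptual rather than calculational: it is the delicate conditioning. The algorithm's choice of $u_k$ is measurable only with respect to its partial view (the history plus the revealed colors of earlier epochs), which is strictly coarser than $(\bC, H_{k-1})$. The cleanest remedy is to prove the per-query surprise bound pointwise over all valid knowledge pairs $(H_{k-1}, P)$ and all colorings $\bC$ extending $P$ consistent with $H_{k-1}$, and then average. Once this is done correctly, independence of adjacency lists under $\BR^*$ makes the rest of the argument routine.
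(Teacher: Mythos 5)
Your proposal is correct and follows essentially the same route as the paper's proof: deterministically charge timeout epochs to the $Q/L$ term, then bound the per-query surprise probability by $O(dQ/W)$ pointwise over the valid knowledge pair and the conditioned coloring (using that the unqueried vertex's adjacency list is still a fresh without-replacement sample given $\bC$), and finish with a coupling to independent Bernoullis plus a Chernoff bound. The conditioning subtlety you flag is handled in the paper exactly as you propose, by fixing $(H,P)$ and a coloring $C$ in the support of $\BR^*(H,P)$ before averaging.
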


\begin{proof}
Let $\alga$ be an algorithm that makes $Q$ queries.
Since each epoch is either a surprise epoch or ends by timeout, the number of epochs which take place in running $\alga$ on a pair $(C,G)$ in the
  support of $\BR^*$ is bounded from above by the number of surprise queries 
  plus $2Q/L$.
As a result, it suffices to show that the probability of $\alga$ observing
  more than $O(Q^2/W)$ many surprises when running on $(\bC,\bG)\sim \BR^*$
  is at most $\exp(-\Omega(Q^2/W))$.

For this purpose we fix a valid knowledge pair $(H,P)$ and let $u=\alga(H,P)$ be the vertex that $\alga$ queries next. 
Below we upper bound the probability of $u$ being a surprise by $O(Q/W)$
  when $\alga$ runs on $(\bC,\bG)\sim \BR^*(H,P)$.
Since $u$ has not been queried before,
  a key observation is that, fixing any coloring $C$ 
  in the support of $\BR^*(H,P)$ and conditioning 
  $(\bC,\bG)\sim \BR^*(H,P)$ further on $\bC=C$,
  the adjacency list of $u$ is distributed as follows:
If $C(u)=\sb$ then each of its $d$ outneighbors is drawn without
  replacement from vertices of color $\sb$  in $C$ (other than $u$ itself) and
  vertices~of color $\sr_i$, $i\leq L/2$;
If $C(u)=\sr_i$ for some $i<L$ then each of its $d$ outneighbors is drawn 
  without replacement from vertices of color $\sr_{i+1}$ in $C$;
If $C(u)=\sr_{L}$, then its adjacency list is empty.
  
As a result, if $C(u)=\sb$, the probability of $u$ being a surprise query
  (as $(\bC,\bG)\sim \BR^*(H,P)$ further conditioning on $\bC=C$)
  is at most 
$$
d\cdot \frac{Q(d+1)}{2N-1}\le \frac{ d^2 Q}{N},
$$
using a union bound and the fact that $\VKG(H)$ has size at most 
  $q(d+1)\le Q(d+1)$. Similarly the probability of $u$ being a surprise when $C(u)=\sr_i$ for some $i<L$
  can be bounded from above by $2d^2Q/W$.
Since $u$ is always surprise-free if $C(u)=\sr_L$, we have that 
  the probability of $u$ being a surprise when $\alga$ runs on $(\bC,\bG)\sim \BR^*(H,P)$
  is at most $2d^2Q/W$.

    Now for each $q \in [Q]$, let $\bX_q$ be a Bernoulli random variable which is $1$ 
    if the $q^{th}$ query made by $\alga$ on $(\bC,\bG)\sim \BR^*$ is a surprise.
    Then what we have shown above implies that the probability of $\bX_q=1$ is
    $O(Q/W)$ even conditioning on any outcomes of $\bX_1,\ldots,\bX_{q-1}$.
    It then follows from the Chernoff bound  (together
with a standard coupling argument) that 
$$
\Pr\left[\sum_{q\in [Q]}\bX_q\ge \frac{4d^2Q^2}{W}\right]\le 
  \exp\left(-\Omega\left(\frac{Q^2}{W}\right)\right).
$$
This finishes the proof of the lemma.
\end{proof}

Recall that an epoch ends as a blue surprise epoch if the last query $u$ is both a 
  surprise and a blue vertex.
If we let $\bX_q$ denote the random variable that is $1$ if the $q^{th}$ query of $\alga$
  turns out to be the last query of a blue surprise epoch, when running on $(\bC,\bG)\sim \BR^*$, then the argument used in the proof of \Cref{lemma:epochbound} 
  implies that the probability of $\bX_q=1$ is at most $O(Q/N)$
  conditioning on any outcomes of $\bX_1,\ldots,\bX_{q-1}$.
This gives us the following upper bound:

\begin{lemma}[Blue surprise epochs bound] \label{lemma:blueepochbound}
    There exists a constant $c_2$ such that for any algorithm that makes $Q$ queries, we have
    \begin{equation}
        \Prx_{(\bC,\bG) \sim \BR^*}
        \left[\text{~more than~} \frac{c_2Q^2}{N}\ \text{blue surprise epochs occur~}\right]\le 
        \exp\left(-\Omega\left(\frac{Q^2}{N}\right)\right).
    \end{equation}
\end{lemma}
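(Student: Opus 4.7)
The plan is to mirror the proof of \Cref{lemma:epochbound} essentially verbatim, with the only change being that the per-query success probability tightens from $O(Q/W)$ to $O(Q/N)$ because we additionally demand that the query vertex be blue (and blue vertices draw their outneighbors from a pool of size $\Theta(N)$ rather than $\Theta(W)$). Concretely, I plan to define, for each $q\in [Q]$, the Bernoulli indicator $\bX_q$ that equals $1$ exactly when the $q^{th}$ query made by $\alga$ is both a surprise and a query on a blue vertex (equivalently, the query which ends a blue surprise epoch). The goal is to show that $\Pr[\bX_q=1\mid \bX_1,\ldots,\bX_{q-1}]\le O(Q/N)$ under $(\bC,\bG)\sim \BR^*$, after which a Chernoff bound plus the standard coupling argument used in \Cref{lemma:epochbound} delivers the desired tail estimate.

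To establish the per-query bound I would fix an arbitrary valid knowledge pair $(H,P)$ reached after some prefix of queries, let $u=\alga(H,P)$ be the next query, and condition on an arbitrary full coloring $C$ consistent with $P$ as in the proof of \Cref{lemma:epochbound}. The event of interest requires two things: (i) $C(u)=\sb$, and (ii) the adjacency list $a$ of $u$ hits $\VKG(H)$. Case~(ii) can only occur when $u$ is queried, so I only need to bound (ii) under the further conditioning $C(u)=\sb$. In that case the adjacency list of $u$ under $\BR^*(H,P)$ conditioned on $\bC=C$ is a sequence of $d$ distinct samples without replacement from the $2N-1$ vertices of color $\sb$ (other than $u$) together with colors $\sr_1,\ldots,\sr_{L/2}$. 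Since $|\VKG(H)|\le Q(d+1)$, a union bound over the $d$ outneighbors yields
\begin{equation*}
\Pr\bigl[a\cap \VKG(H)\ne \emptyset \,\big|\, \bC=C,\ C(u)=\sb\bigr] \;\le\; d\cdot \frac{Q(d+1)}{2N-1-(d-1)} \;\le\; \frac{c\,Q}{N}
\end{equation*}
for a constant $c$ depending on $d$. When $C(u)$ is red we have $\Pr[\bX_q=1\mid \bC=C]=0$. Averaging over colorings, this gives $\Pr[\bX_q=1\mid \bX_1,\ldots,\bX_{q-1}]\le c\,Q/N$.

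Finally, I would invoke the coupling-plus-Chernoff step that already appears in \Cref{lemma:epochbound}: the sum $\sum_q \bX_q$ is stochastically dominated by a sum of $Q$ i.i.d.\ Bernoullis of parameter $cQ/N$, which has mean $cQ^2/N$, so a multiplicative Chernoff bound gives
\begin{equation*}
\Prx_{(\bC,\bG)\sim \BR^*}\!\left[\sum_{q\in [Q]} \bX_q \ge \frac{c_2 Q^2}{N}\right] \;\le\; \exp\!\left(-\Omega\!\left(\frac{Q^2}{N}\right)\right)
\end{equation*}
for an appropriate constant $c_2>c$, as claimed.

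I do not anticipate a serious obstacle; the lemma is essentially a corollary of the analysis developed for \Cref{lemma:epochbound}. The only point requiring care is the conditioning argument that shows the adjacency list of $u$ is (conditionally) uniform without replacement from the prescribed pool regardless of the earlier outcomes $\bX_1,\ldots,\bX_{q-1}$; this is handled, as in the previous lemma, by first conditioning on the whole coloring $\bC=C$ and then exploiting the fact that $u$ has not been queried before and so no information about its adjacency list has leaked into $(H,P)$ beyond what is already captured by $C$.
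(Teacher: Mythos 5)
Your proposal is correct and matches the paper's argument exactly: the paper also proves this lemma by rerunning the proof of \Cref{lemma:epochbound} with the indicator $\bX_q$ now requiring the queried vertex to be blue, which tightens the per-query bound to $O(Q/N)$ since a blue vertex's outneighbors are drawn from a pool of size $2N-1$, and then applies the same coupling-plus-Chernoff step. No gaps.
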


\section{Bounding the probability of long blue paths} \label{sec:no-long-blue-path}

In this section we prove a key lemma necessary for the proof of \Cref{thirdmaintheorem}: that in any given epoch, the probability that a $Q^*$-query algorithm discovers a ``long" path of previously unseen blue vertices is low. As a result, with high probability, the subgraph induced by the blue nodes revealed at the end of each epoch is a forest in which every tree has small depth.

\begin{lemma}[Long blue paths are unlikely.]\label{maintechnicallemma}
Let $(H,P)$ be a valid knowledge pair in which the length $q$
  of $H$ is bounded by $Q^*$.
Let $E_\ell$ be the current
  epoch of $H$ and let $(\bC,\bG)\sim \BR^*(H,P)$.
The probability that $\KG(E_\ell)$~contains a path of length 
  at least $4\log N$ consisting of blue vertices only under $\bC$ is $o(N^{-2})$.\footnote{The constant factors in the lemma statement are arbitrary but will be convenient later in the analysis. }
\end{lemma}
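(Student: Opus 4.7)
The plan is to apply Markov's inequality to the random variable $X$ counting the number of directed paths of length exactly $k:=4\log N$ in $\KG(E_\ell)$ whose vertices are all blue under $\bC$; showing $\mathbb{E}[X]=o(N^{-2})$ yields the lemma. The argument splits into a structural bound on the number of length-$k$ paths in $\KG(E_\ell)$ and a probabilistic bound on any particular such path being blue.

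For the structural bound, inside $E_\ell$ every query except possibly the last is surprise-free, so its outneighbors are fresh vertices not in $\VKG$ of the prior history. Therefore each vertex first observed in $E_\ell$ has in-degree at most one among the epoch's edges, and those edges form a directed forest $\calT$ on $\VKG(E_\ell)$ (any closing surprise edge adds at most one additional path, absorbed into constants). Since there are at most $|E_\ell|\le L/2$ internal nodes each with at most $d$ children, $|V(\calT)|\le L(d+1)/2 = O(Ld)$; in a directed forest each length-$k$ path is uniquely determined by its sink, so the number of length-$k$ paths in $\calT$ is at most $|V(\calT)|$.

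For the probabilistic bound, I claim that conditional on $\calT$, any specific length-$k$ path $p=(v_0,\ldots,v_k)$ in $\calT$ satisfies $\Pr[v_0,\ldots,v_k\text{ all blue}\mid\calT]\le(1/2)^k$. The idea is Markov-style propagation: by \Cref{basicproperty}, for an edge $u\to v$ in $\calT$ with $u$ blue, the pool from which $v$ was sampled is $(\bB\setminus\{u\})\cup\bigcup_{j\le L/2}\bR_j$, of size $2N-1$ and containing exactly $N-1$ blue vertices. Since the adjacency list of a blue vertex is a uniform without-replacement draw from this pool, a symmetry argument yields $\Pr[v\text{ blue}\mid u\text{ blue},\ v\in\mathrm{adj}(u)]\le(N-1)/(2N-1)<1/2$. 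Because all non-root vertices of $\calT$ are first discovered inside $E_\ell$, only $v_0$ may lie in $\VKG(H')$; using $\Pr[v_0\text{ blue}]\le 1$ and iterating the local estimate along $v_0\to v_1\to\cdots\to v_k$ yields the claimed bound. Combining via linearity of expectation,
\[
\mathbb{E}[X] \;\le\; (1/2)^k\cdot\mathbb{E}[|V(\calT)|] \;\le\; O(Ld)\cdot (1/2)^{4\log N} \;=\; o(N^{-2}),
\]
and Markov's inequality finishes the proof.

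The main obstacle will be formalizing the Markov propagation. The true posterior of $\bC$ given the history is not uniform over valid colorings: it is weighted by query likelihoods $\prod_{u\text{ queried}}f(\bC(u))$, where $f(\sb)<f(\sr_j)$ for $j<L$ since $W<2N-1$, and it must respect the global count constraints inherited from $P$. This likelihood bias actually points \emph{against} blue at queried vertices, so intuitively the naive $(1/2)^k$ estimate is only strengthened, not weakened. A principle-of-deferred-decisions argument that samples each fresh color only when the algorithm first inspects the pool containing that vertex, combined with standard $(1\pm o(1))$ corrections from finite-population sampling, should make the bound fully rigorous; the at-most-one surprise edge per epoch and the $O(Q^* d)$-sized interface with $\VKG(H')$ contribute only negligible lower-order terms.
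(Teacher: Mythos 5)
Your skeleton matches the paper's: $\KG(E_\ell)$ is a disjoint union of out-trees because every query in the current epoch is surprise-free, a fixed root-to-depth-$4\log N$ path is all blue with probability roughly $(1/2)^{4\log N}\le N^{-4}$, and a union bound over the (at most $3N$, or in your finer count $O(Ld)$) possible endpoints finishes. The gap is that the per-edge estimate $\Pr[v\ \text{blue}\mid u\ \text{blue},\,v\in\mathrm{adj}(u)]\le 1/2$ is asserted for the \emph{posterior} $\BR^*(H,P)$ but only justified for the prior. Your ``symmetry argument'' conditions on nothing but $u$'s color and $v$'s membership in $u$'s list, whereas the lemma requires conditioning on the entire valid knowledge pair: the likelihood of every queried vertex's answer (which differs by color, since red pools have size $W\ll 2N$), the hard constraints of \Cref{basicproperty} (a sink at depth $k$ forces an entire tree to be red with determined layers; a blue root forbids any path to a sink of length $<L/2$; roots in $\VKG(H')$ have colors fixed by $P$), and the global count constraint that exactly $N$ of $3N$ vertices are blue. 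Establishing that the true posterior nevertheless behaves like independent per-edge propagation is the entire technical content of the lemma; it is what the paper's Section 6 does by introducing the naive product distribution $\CH'$ and proving \Cref{lemmaapprox}, a pointwise $[0.9,1.1]$ comparison between $\CH$ and $\CH'$ via an explicit computation of the likelihood $\sw(S)$ and the prior $\Pr_{\bC\sim\bU}[\bC\ \text{extends}\ S]$.

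Your closing heuristic --- that the likelihood bias ``points against blue'' so the $(1/2)^k$ bound is only strengthened --- is not a proof and is not obviously true once the hard constraints and finite-population effects are included; indeed the paper does not obtain a one-sided per-step bound of $1/2$ but only a two-sided multiplicative $1\pm 0.1$ on the probability of whole colorings, and even that requires the parameter relation $LQ^*=o(W)$ (the paper's equation~(\ref{keykey})), which your argument never invokes. So the proposal correctly identifies the strategy and correctly flags where the difficulty lies, but it defers rather than supplies the step that constitutes the proof. (Two minor points: the current epoch contains no surprise query at all, so no ``closing surprise edge'' needs to be absorbed; and since $H$ is fixed in the lemma, $\KG(E_\ell)$ is deterministic --- only the coloring is random --- so no conditioning on $\calT$ is needed.)
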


We begin with some notation and a sketch of the proof. Let $(H,P)$ be a valid knowledge pair, let $\calE=(E_1,\ldots,E_\ell)$ be the
  epoch decomposition of $H$, and let $H'=E_1\circ\cdots \circ E_{\ell-1}$.
By definition, the current epoch $E_\ell$ satisfies $|E_\ell|<L/2$ and every query in $E_\ell$ is surprise-free in $H$. As a result, the graph $\KG(E_\ell)$ is a vertex-disjoint union of 
  degree-$d$ out-trees.\footnote{Note that an~isolated vertex is also counted
  as a degree-$d$ out-tree.} This leads to the following observation:

\begin{property}\label{obse1}
Let $T$ be an out-tree in $\KG(E_\ell)$. 
\begin{enumerate}
    \item Every vertex of $T$, other than the root, lies outside $\VKG(H')$. (The root may or may not lie in $\VKG(H')$.)
    \item Every internal vertex and every sink in $T$ is queried in $E_\ell$.
\end{enumerate}
\end{property}

Let $\bS\sim \CH$ 
  be the distribution of partial colorings over $\VKG(H)$
  induced by $\bC$ drawn as in $(\bC,\bG)\sim \BR^*(H,P)$.
Then every partial coloring $S$ in the support of $\CH$ must be a 
  \emph{good} partial coloring over $\VKG(H)$ (see \Cref{basicproperty}):

\begin{defn}
We say $S$ is a \emph{good} partial coloring over $\VKG(H)$ if (1)
    $S$ is an extension of $P$, (2) For each directed edge $(u,v)$ in $\KG(H)$, either 
      $S(u)=S(v)=\sb,$ or $S(u)=\sb$ and $ S(v)=\sr_i$ for some $i\in [L/2]$,
      or $S(u)=\sr_i$ and $S(v)=\sr_{i+1}$ for some $i<L$, and (3)
      $S(u)=\sr_L$ for every sink vertex in $H$.
\end{defn}

In other words,  \Cref{maintechnicallemma} states that 
  $\KG(E_\ell)$ is unlikely to have a long blue path under $\bS\sim \CH$. 
To prove  \Cref{maintechnicallemma}, we introduce a \emph{naive 
  distribution} $\CH'$ in \Cref{naivedist} that is much easier to work with and at the 
  same time serves as a good approximation of the distribution $\CH$.
We then show that $\KG(E_\ell)$ is unlikely to have a long blue path under
  $\bS'\sim \CH'$, from which \Cref{maintechnicallemma} follows.

The intuition behind the naive distribution $\CH'$ is that we color each tree $T$ in $\KG(E_\ell)$~independently, ignoring all information in the knowledge pair $(H,P)$ other than the tree $T$ itself. Roughly speaking, we generate a coloring for $T$ as follows. If the root of $T$ lies outside of $\VKG(H')$, we color it red with probability 2/3 and blue with probability 1/3 as if it were drawn uniformly at random from $[3N]$. If the root of $T$ lies inside $\VKG(H')$, its color is known. We then propagate down the tree in breadth-first order.
If the parent of a vertex $v$ was colored $\sb$, $v$ is colored $\sb$ with probability $1/2$ and $\sr_i$ with probability $1/L$ for each $i\in [L/2]$; if the parent of $v$ was colored $\sr_i$ then $v$ is colored $\sr_{i+1}$. $\CH'$ does not capture $\CH$ perfectly, but we show in \Cref{goodapprox} that they are pointwise very close to each other.

\subsection{The naive distribution}\label{naivedist}

Before introducing the naive distribution $\CH'$,
  we start by classifying trees of $\KG(E_\ell)$ into four types and 
  note that we can already deduce colors of certain vertices in any
  good coloring $S$ over $\VKG(H)$.
Let $T$ be an out-tree of $\KG(E_\ell)$ with height $\sh(T)$ and root vertex $r$:
\begin{flushleft}\begin{itemize}
    \item $T$ is a \emph{type-$1$} out-tree if $r \in \VKG(H')$
    (so the color of $r$ has already been revealed in $P$) and $P(r)=\sr_i$ for some
    $i\in [L]$. It follows from \Cref{basicproperty}
    that every valid coloring $S$ has $S(v)=\sr_{i+\ell}$
    for each vertex $v$ of depth $\ell$ in the tree.
    (Note that we must have $i+\sh(T)\le L$; otherwise the pair $(H,P)$ cannot be a valid
    knowledge pair.)
    \item  $T$ is a \emph{type-$2$} out-tree if $r \in \VKG(H')$ and $P(r)=\sb$.  Then \emph{none} of its leaves
    can be a sink; otherwise $(H,P)$ implies that there is a path from
    a $\sb$ vertex to a $\sr_L$ vertex of length at most 
    $\sh(T)\le |E_\ell|<L/2$, contradicting with the validity of $(H,P)$.
    \item $T$ is a \emph{type-$3$} out-tree if $r$ 
    is not in $\VKG(H')$ but $T$ contains at least one sink leaf $v^*$.
    Given that $\sh(T)< L/2$,
      it follows from \Cref{basicproperty} 
      that every good coloring $S$ satisfies
      $S(r)=\sr_{L-k}$, where $k$ is the depth of $v^*$ in $T$,
      and $S(v)=\sr_{L-k+\ell}$ for every vertex $v$ of depth $\ell$ in the tree.
    \item $T$ is a \emph{type-$4$} out-tree if
     $r$ is not in $\VKG(H')$ and none of its leaves  is a sink.
\end{itemize}\end{flushleft}

\begin{figure}[th!]
    \centering
    \includegraphics[width=15cm]{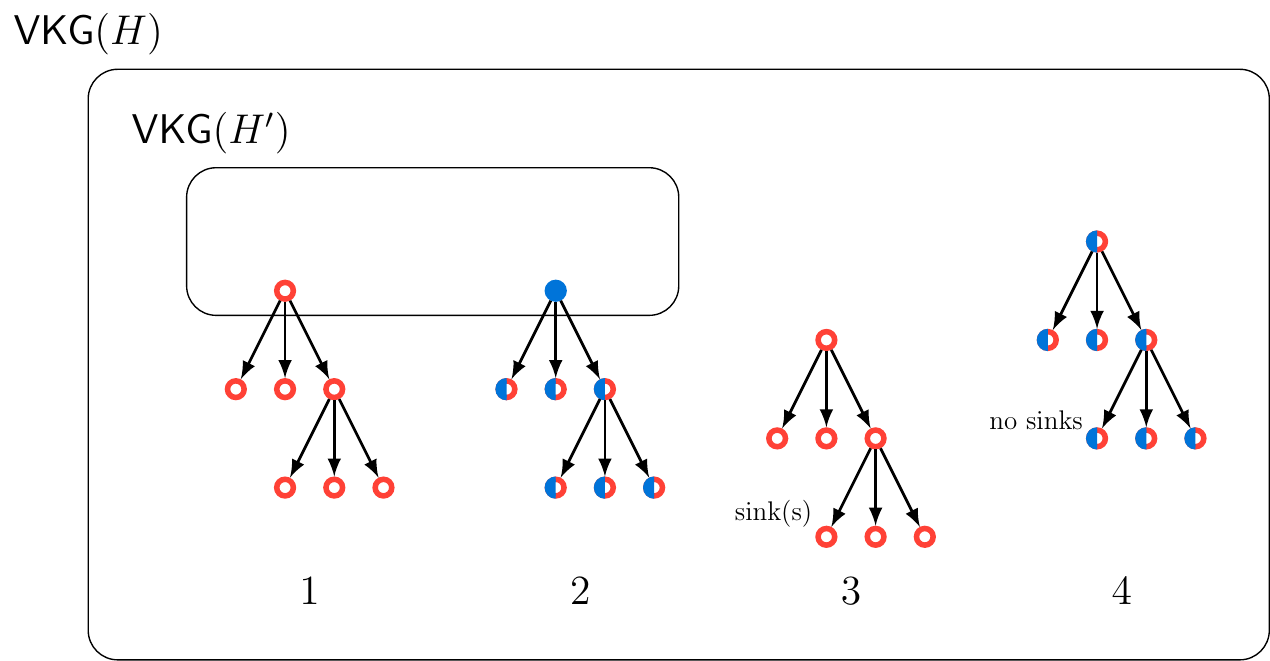}
    \caption{Possible out-tree types during a given epoch. Bichromatic circles denote vertices whose types cannot be determined from $(H,P)$ alone.}
    \label{fig:treetypes}
\end{figure}

\Cref{fig:treetypes} illustrates the four types of out-trees. Let $U$ denote the set of vertices that are always colored $\sr$ or always colored $\sb$ in a good partial coloring for $\VKG(H)$; that is, every vertex in $\VKG(H')$ as well as vertices in type-1 and type-3 trees in $\KG(E_l)$. Let $P'$ denote the unique partial coloring over $U$ that agrees with every good partial coloring $S$ over $\VKG(H)$. We let $Y := \VKG(H) \setminus U$ denote the set of vertices which may be colored either $\sr$ or $\sb$ in a good coloring; that is, all vertices in type-2 and type-4 trees except for the roots of type-2 trees.

We are ready to define the \emph{naive distribution} $\CH'$ of partial colorings over $\VKG(H)$.
A coloring $\bS'\sim \CH'$ is drawn using the following procedure:
\begin{flushleft}\begin{enumerate}
    \item First we color each vertex $u\in U$ as $P'(u)$ (so $\bS'$ is always
    an extension of $P'$);
    \item For each type-$4$ tree $T$, color its root vertex $r$ blue with probability 
    $N/(3N-\sh(T)W)$ and $\sr_i$ with probability $Y/(3N-\sh(T)W)$ for each $i\le L-\sh(T)$.
    (The intuition behind the denominator is that because there is a path of length $\sh(T)$ that starts
    at $r$, its color cannot be $\sr_{L-\sh(T)+1},\ldots,\sr_L$.)
\item Then we go through each type-$2$ and type-$4$ tree one by one and 
      consider uncolored vertices in breadth-first order.
    For each vertex $v$, if its parent is colored $\sb$, color $v$ with $\sb$ with probability $1/2$
    and with $\sr_i$ for each $i\in [L/2]$ with probability $1/L$.
    If the parent of $v$ is colored $\sr_i$, color $v$ $\sr_{i+1}$.\footnote{ Observe that we never go beyond $\sr_L$ because the height of each tree
      is at most $L/2$.}
\end{enumerate}\end{flushleft}
The following property follows directly from the procedure for $\CH'$ above:

\begin{property}
Every partial coloring in the support of $\CH'$ is a good 
  partial coloring over $\VKG(H)$.
\end{property}

Both distributions $\CH$ and $\CH'$ are supported on good partial colorings over $\VKG(H)$. 
The next lemma shows that $\CH'$ is  a good approximation of $\CH$:
\begin{lemma}\label{lemmaapprox}
For every good partial coloring $S$ over $\VKG(H)$, we have 
\begin{equation}\label{eq1}
0. 9 \cdot \Pr_{\bS'\sim \CH'} \big[\bS'=S\big]\le 
\Pr_{\bS\sim \CH} \big[\bS=S\big]\le 1. 1 \cdot 
\Pr_{\bS'\sim \CH'} \big[\bS'=S\big].
\end{equation}
\end{lemma}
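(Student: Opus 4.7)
The plan is to compute $\Pr_{\bS \sim \CH}[\bS = S]$ and $\Pr_{\bS' \sim \CH'}[\bS' = S]$ in parallel and show their ratio is $1 \pm o(1)$ uniformly in $S$. Write $\pi(S) := \mu(S) \cdot \Pi(S)$, where $\mu(S)$ is the probability under the uniform extension of $P$ that $\bC|_{\VKG(H)} = S$ and $\Pi(S)$ is the conditional probability that the adjacency lists in $H$ are generated correctly given $\bC|_{\VKG(H)} = S$; then $\Pr_\CH[\bS = S] = \pi(S)/\sum_{S'} \pi(S')$. It therefore suffices to show $\pi(S)/\Pr_{\CH'}[\bS' = S] = Z_0 \cdot (1 \pm o(1))$ for some $Z_0$ not depending on $S$: the normalizing sum is then also $Z_0(1 \pm o(1))$, and the ratio of probabilities lies in $[0.9, 1.1]$ for $N$ sufficiently large.

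\textbf{Key steps.} First, using falling factorials, write $\mu(S) = \prod_c (n'_c)_{(\alpha_c(S))}/(m')_{(k)}$, where $\alpha_c(S)$ is the number of vertices in $\VKG(H) \setminus \VKG(H')$ colored $c$ by $S$ and $k = |\VKG(H) \setminus \VKG(H')| \le (L/2)(d+1) = O(L)$. Since $|\VKG(H')|, |\VKG(E_\ell)| = O(Q^*) \ll W$, each $n'_c$ is at least $W/2$, and the standard bound $(x)_{(\alpha)} = x^\alpha(1 + O(\alpha^2/x))$ yields $\mu(S) = \prod_{v \text{ new}} q_{S(v)} \cdot (1 \pm O(L^2/W))$, with $q_c := n'_c/m'$ and $L^2/W = O((2N)^{-1/3}) = o(1)$. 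Next, $\Pi(S)$ factorizes as a product of $p_\sb \approx 1/(2N)^d$ or $p_\sr \approx 1/W^d$ over vertices queried in $E_\ell$, and $\Pr_{\CH'}[\bS' = S]$ factorizes over new vertices directly from its procedural definition, so the ratio $\pi(S)/\Pr_{\CH'}[\bS' = S]$ becomes a product over new vertices. Regrouping yields, for each internal vertex $u$ of $\KG(E_\ell)$, a ``substar factor''
\[\mathrm{substar}(u) \;:=\; p_{S(u)} \cdot \prod_{v \,:\, u \to v} \frac{q_{S(v)}}{p_{\pi'}(v \mid S(u))},\]
where $p_{\pi'}(\cdot \mid \sb) \in \{1/2, 1/L\}$ and $p_{\pi'}(\cdot \mid \sr_i) = 1$ are the transition probabilities dictated by $\CH'$. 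A direct computation using the identity $LW = 2N$ shows that $\mathrm{substar}(u)$ equals $(1\pm o(1))/(3N)^d$ whether $S(u) = \sb$ (so it evaluates to $p_\sb \cdot (2q_\sb)^{d_\sb}(Lq_\sr)^{d_\sr}$) or $S(u) = \sr_i$ (so it evaluates to $p_\sr \cdot q_\sr^d$). Finally, handle the leftover factors at tree roots: for a type-4 tree the extra contribution $q_{S(r)}/p_{\mathrm{root}}(T, S(r))$ equals $(3N - \sh(T) W)/(3N)$ for either choice of root color; for a type-3 tree the root contributes $q_\sr$; type-1 and type-2 roots contribute nothing; and height-0 trees are trivial special cases.

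\textbf{Main obstacle.} Because the structure of $\KG(E_\ell)$ is determined by $H$ alone, the combined product of substar factors and root corrections is a constant $Z_0$ independent of $S$, completing the proof up to the $1 \pm o(1)$ errors already controlled. The main obstacle is the substar identity itself: one must verify rigorously, in every case (blue vs.\ red internal vertex, the four tree types, and the degenerate height-0 trees), that the color-dependent parts of the per-vertex factors telescope into the common value $1/(3N)^d$. This is the algebraic heart of the lemma, and it is precisely why the transition probabilities in $\CH'$ were defined with the particular constants $1/2$, $1/L$, $N/(3N - \sh(T)W)$, and $W/(3N - \sh(T)W)$, combined with the parameter choice $LW = 2N$; everything else in the proof is routine bookkeeping around this cancellation.
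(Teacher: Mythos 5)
Your proposal is correct and follows essentially the same route as the paper: both decompose the unnormalized weight of $S$ under $\CH$ into (coloring probability) $\times$ (adjacency-list consistency probability), approximate each factor up to $1\pm o(1)$ multiplicative error using $LQ^*=o(W)$, and conclude by showing the ratio to $\Pr_{\CH'}[\bS'=S]$ is a constant independent of $S$, so that normalization finishes the argument. The only difference is bookkeeping in the last step: you verify $S$-independence locally via the per-vertex ``substar'' identity $p_{S(u)}\prod_v q_{S(v)}/p_{\pi'}(v\mid S(u))=(1\pm o(1))/(3N)^d$ together with the root corrections, whereas the paper collects global exponents and uses the identity $\#_r^*(Y,S)+\#_{bb}(S)+|\calT_{4,b}(S)|=|Y|$; these are the same cancellation viewed locally versus globally, and both are valid.
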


Before proving \Cref{lemmaapprox} in \Cref{goodapprox}, 
  we use it to give a quick proof of \Cref{maintechnicallemma}.

\begin{proof}[Proof of \Cref{maintechnicallemma} using \Cref{lemmaapprox}]
Given a good coloring $S$ over $\VKG(H)$ 
  we use $\text{LBP}(S)$ to denote the event that $\KG(E_\ell)$ contains
  a blue path of length at least $4\log N$ under $S$.
It follows from \Cref{lemmaapprox} that
\begin{equation}\label{hehehaha}
\Pr_{\bS\sim \CH} \big[\text{LBP}(\bS)\big]\le 1.1\cdot \Pr_{\bS'\sim \CH'}
  \big[\text{LBP}(\bS')\big].
\end{equation}
On the other hand, if $\text{LBP}(S)$ holds then there must be a vertex $v$
  in either a type-$2$ or a type-$4$ tree (since every vertex in a type-$1$ or type-$3$ tree must be red in a good partial coloring)
  such that $v$ is of depth at least $4\log N$ and the path from the root
  to $v$ is all blue.
For each such vertex $v$,
  let $\text{LBP}(S,v)$ denote the event that the path from the root to
  $v$ is blue under $S$.
Then the probability of $\text{LBP}(\bS',v)$ when $\bS'\sim \CH'$
  is $(1/2)^{\ell}\le 1/N^4$ if $v$ is in a type-$2$ tree and has
  depth $\ell$, and is
$$
\frac{N}{3N-\sh(T)W}\cdot (1/2)^\ell\le 1/N^4
$$
if $v$ is in a type-$4$ tree $T$.
As a result, the probability of $\text{LBP}(\bS')$ when $\bS'\sim \CH'$
  is $O(1/N^3)$ by a union bound since the number of $v$ is trivially 
  at most $3N$. The lemma follows from \eqref{hehehaha}.
\end{proof} 

\subsection{The naive distribution is a good approximation: Proof of \Cref{lemmaapprox}}\label{goodapprox}

To simplify the presentation, in this section we use the notation ``$a\pm b$'' to denote a quantity that is between $a-b$ and $a+b$.

Let $S$ be a good partial coloring over $\VKG(H)$.
We write $\calT_2$ to denote the set of type-$2$ trees in $\KG(E_\ell)$ and $\calT_4$ to denote the set of type-$4$ trees in $\KG(E_\ell)$,
  and we write $\calT$ to denote $\calT_2\cup \calT_4$. 
Given $S$, we write $\calT_{4,b}(S)$ to denote the set of type-$4$ trees with a blue 
  root and $\calT_{4,r}(S)$ to denote the set of type-$4$ trees with a red root in $S$.
We also use $\#_{br}( S),\#_{bb}( S)$ and $\#_{rr}( S)$ 
  to denote the total number of blue-red, blue-blue and red-red edges in all trees in $\calT$.

We start with the easier task of obtaining a closed-form expression for 
  $\Pr_{\bS'\sim\CH'}[\bS'=S]$. 
This quantity can be written as a product: each root of a type-$4$ tree 
  contributes a factor which depends on its~color in $S$ (recall the second step of the procedure for drawing from $\CH'$), and each edge of a
  tree in $\calT$ contributes a factor which is $1/2$ if it is a blue-blue edge in $S$,
  $1/L$ if it is a blue-red edge, and $1$ if it is a red-red edge.
As a result, we have 
\begin{align*}
\Pr_{\bS'\sim\CH'}\big[\bS'=S\big]
&= \left( \prod_{T\in \calT_{4,b}(S)}\frac{N}{3N-\sh(T)W} \right)\left(
 \prod_{T\in \calT_{4,r}(S)}\frac{W}{3N-\sh(T)W}\right)  
\left(\frac{1}{L}\right)^{ \#_{br}( S)}
\left(\frac{1}{2}\right)^{ \#_{bb}( S)}\\
&= \left(\prod_{T\in \calT_4}\frac{W}{3N-\sh(T)W}\right)  
\left(\frac{L}{2}\right)^{|\calT_{4,b}(S)|}  \left(\frac{1}{L}\right)^{ \#_{br}( S)}
\left(\frac{1}{2}\right)^{ \#_{bb}(S)}\\
&= \tau_1\cdot \left(\frac{L}{2}\right)^{|\calT_{4,b}(S)|}\left(\frac{1}{L}\right)^{\#_{br}(S)}\left(\frac{1}{2}\right)^{\#_{bb}(S)},
\end{align*}
where the second equality uses $WL/2=N$ and the fact that ${\cal T}_4$ is the disjoint union of ${\cal T}_{4,b}(S)$ and ${\cal T}_{4,r}(S)$, and the quantity  $\tau_1>0$ is a value that does not depend on $S$.

Next we work on the probability distribution $\CH$.
For each good partial coloring $S$ over $\VKG(H)$, 
  we write $\sw(S)$ as a shorthand for
\begin{equation}\label{defw}
\sw(S):= \Pr_{(\bC,\bG)\sim \BR^*}\big[\text{$\bC$ is an extension of $S$ and
  $\bG$ is consistent with $H$}\big].
\end{equation}
Given the definition of $\CH$ and $\sw(\cdot)$, we have
\begin{equation}\label{standard}
\Pr_{\bS\sim \CH}\big[\bS=S\big] =
\frac{\sw(S)}{\sum_{\text{good}\ S'} \sw(S')},
\end{equation}
where the sum is over all good partial colorings
  $S'$ over $\VKG(H)$.

Looking ahead, our plan is to show that there is a value
  $\tau>0$ (independent of $\bS$) such that
\begin{equation}\label{keyineq}
0.99\cdot \tau\cdot \Pr_{\bS'\sim \CH'}\big[\bS'=S\big]\le 
\sw(S)\le 1.01 \cdot \tau\cdot \Pr_{\bS'\sim \CH'}\big[\bS'=S\big].
\end{equation}
With (\ref{keyineq}), it follows from $\sum_{\text{good}\ S'}
  \Pr_{\bS'\sim \CH'}[\bS'=S']=1$ (which holds because $\CH'$ is supported on good colorings)
  that
$$
0.99\tau\le
\sum_{\text{good}\ S'}\sw(S')\le 1.01\tau.
$$
Combining this with (\ref{keyineq}) and (\ref{standard}), we have
$$
\Pr_{\bS\sim \CH}\big[\bS=S\big]\le \frac{1.01\cdot \tau\cdot \Pr_{\bS'\sim\CH'}
[\bS'=S]}{0.99\tau}<1.1\cdot \Pr_{\bS'\sim\CH'}\big[\bS'=S\big]
$$
and the other side of (\ref{eq1}) can be proved similarly.

So it suffices to prove (\ref{keyineq}). 
We start with some notation. Recall that 
   $\bU$ is the uniform distribution over all full colorings.
Given a full coloring $C$, we use $\BR(C)$ to denote the distribution
  of \br{} graphs generated using $C$ as the full coloring 
  in the procedure for $\BR$.
  
Now we consider the $(\bC,\bG)\sim \BR^*$ in the definition of $\sw(S)$ (recall (\ref{defw})) by first drawing  
  a full~coloring $\bC\sim \bU$.
If $\bC$ is not an extension of $S$ then we already fail to satisfy the 
  condition in the definition of $\sw(S)$.
If $\bC$ is an extension of $S$ then we draw $\bG\sim \BR(\bC)$ to
  see if $\bG$ is consistent with $H$.
  
A~useful~observation is that every $C$ that extends $S$ shares the same probability of $\bG\sim \BR(C)$ being consistent with~$H$. Let $\#_b(U)$ (respectively $\#_r(U)$)
  be the number of blue (respectively red) vertices in $U$ under $S$ that are \emph{queried}~in~$H$; note that these
  two numbers are independent of $S$ since every good  coloring must be an extension of $P'$ on $U$.
Let $\#_b(Y,S)$ (respectively $\#_r(Y,S)$) 
  denote the number of blue (respectively red) vertices in $Y$ under $S$
  that are \emph{queried} in $H$.
Then for every $C$ that is an extension of $S$,
  the probability of $\bG\sim \BR(C)$ being consistent with $H$ is
\begin{align}\nonumber
& \left(\frac{1}{(2N-1)\cdots (2N-d)}\right)^{\#_b(U)+\#_b(Y,S)}
\left(\frac{1}{W\cdots (W-d+1)}\right)^{\#_r(U)+\#_r(Y,S)}\\ 
&\hspace{1cm}=\tau_2\cdot \left(\frac{1}{(2N-1)\cdots (2N-d)}\right)^{ \#_b(Y,S)}
\left(\frac{1}{W\cdots (W-d+1)}\right)^{ \#_r(Y,S)},\label{hahahehehe}
\end{align}
for some positive value $\tau_2$ independent of $S$.
Note that our choices of $L,W$ and $Q^*$ satisfy
\begin{equation}\label{keykey}
LQ^*=o(W).
\end{equation}
Using (\ref{keykey}) (we only need $L=o(W)$ here) and the fact that $\#_b(Y,S),\#_r(Y,S)\le L/2$,
  (\ref{hahahehehe}) becomes
$$
(1\pm o_N(1))\cdot \tau_2\cdot \left(\frac{1}{2N}\right)^{d\cdot \#_b(Y,S)}\cdot 
\left(\frac{1}{W}\right)^{d\cdot \#_r(Y,S)}
=(1\pm o_N(1))\cdot \tau_3\cdot \left(\frac{1}{L}\right)^{d\cdot \#_b(Y,S)}  ,
$$
for some positive value $\tau_3$ that is independent of $S$ since $\#_b(Y,S)+\#_r(Y,S)$ is a constant
  independent of $S$.

Note that $d\cdot \#_b(Y,S)= \#_{bb}( S)+\#_{br}( S)-d |\calT_2|$.
This is just because each blue vertex queried in~$Y$ introduces $d$ edges that 
  are either blue-blue or blue-red in $\calT$;
  we need to subtract $d |\calT_2|$ because roots of type-$2$ trees
  are not included in $Y$.
Since $|\calT_2|$ is a value independent of $S$, (\ref{hahahehehe}) can be simplified to 
\begin{align*}
(1\pm o_N(1))\cdot \tau_4\cdot \left(\frac{1}{L}\right)^{\#_{bb}( S)+\#_{br}( S)},
\end{align*}
for some positive value $\tau_4$ that is independent of $S$.
As a result, we have 
\begin{equation} \label{eq:brine}
\sw(S)=(1\pm o_N(1))\cdot 
\Pr_{\bC\sim \bU}\big[\text{$\bC$ is an extension of $S$}\big]
\cdot \tau_4\cdot  \left(\frac{1}{L}\right)^{\#_{bb}( S)+\#_{br}( S)}.
\end{equation}

Next evaluate the probability that $\bC\sim \bU$ is an extension of $S$
  over $\VKG(H)=U\cup Y$.
For this purpose we consider the following experiment:
\begin{flushleft}\begin{enumerate}
    \item Pick an arbitrary ordering $u_1,\ldots,u_{|U|}$ of $U$ and an arbitrary
      ordering $y_1,\ldots,y_{|Y|}$ of $Y$.
    \item Start with $N$ $\sb$ pebbles and $W$ $\sr_i$ pebbles for each $i\in [L]$.
    Go through vertices $u_1,\ldots,u_{|U|}$ one by one and assign each one a remaining (as yet unassigned) pebble uniformly
    at random. Then go through vertices $y_1,\ldots,y_{|Y|}$ one by one and assign each one a remaining
    pebble uniformly at random.
    \item For each $u_i$, we use $\bX_i$ to denote the Bernoulli random variable that is $1$
    if $u_i$ is assigned a pebble of color $S(u_i)$, and define $\bY_i$ similarly for each $y_i$.
\end{enumerate}\end{flushleft}
Then the probability $\Pr_{\bC\sim \bU}[\bC$ is an extension of $S]$ that we are interested in is 
\begin{align*}
 \Pr\big[\bX_1= \cdots=\bY_{|Y|}=1\big] 
& =\Pr\big[\bX_1=\cdots =\bX_{|U|}=1]\cdot \prod_{i\in [|Y|]}
\Pr\big[\bY_i=1\hspace{0.06cm}|\hspace{0.06cm}\bX_1=\cdots=\bY_{i-1}=1\big]\\
& =\tau_5\cdot \prod_{i\in [|Y|]}
\Pr\big[\bY_i=1\hspace{0.06cm}|\hspace{0.06cm}\bX_1=\cdots=\bY_{i-1}=1\big],
\end{align*}
for some positive value $\tau_5$ that is independent of $S$.
For each $y_i$ with $S(y_i)=\sb$, we have 
$$
\frac{N-Q^*(d+1)}{3N}\le 
\Pr\big[\bY_i=1\hspace{0.06cm}|\hspace{0.06cm}\bX_1=\cdots=\bY_{i-1}=1\big]\le 
\frac{N}{3N-Q^*(d+1)}.
$$
This is because regardless of outcomes for vertices before $y_i$,
  the number of blue pebbles left in the round of $y_i$ lies between $N-Q^*(d+1)$ and $N$ (since $\VKG(H)$
  has no more than $Q^*(d+1)$ vertices) and the total number of pebbles
  left is between $3N-Q^*(d+1)$ and $3N$.

Similarly for each $y_i$ with $S(y_i)=\sr_j$ for some $j\in [L]$, we have 
$$
\frac{W-Q^*(d+1)}{3N}\le 
\Pr\big[\bY_i=1\hspace{0.06cm}|\hspace{0.06cm}\bX_1=\cdots=\bY_{i-1}=1\big]\le 
\frac{W}{3N-Q^*(d+1)}
$$
Let $\#^*_b(Y,S)$ (or $\#^*_r(Y,S)$) denote the number of blue (or red) vertices in $Y$ under $S$
 (unlike $\#_b(Y,S)$ and $\#_r(Y,S)$, these vertices may have not been queried).
 It follows from (\ref{keykey}) and  
  $|Y|=O(L)$ that
\begin{align}
 \Pr_{\bC\sim \bU}\big[\text{$\bC$ is an extension of $S$}\big] 
 &=(1\pm o_N(1))\cdot \tau_5\cdot \left(\frac{1}{3}\right)^{\#^*_b(Y,S)}
\left(\frac{W}{3N}\right)^{\#^*_r(Y,S)}\nonumber \\[0.5ex]
 &=(1\pm o_N(1))\cdot \tau_6\cdot  \left(\frac{2}{L}\right)^{\#^*_r(Y,S)}, \label{eq:cucumber}
\end{align}
for some value $\tau_6>0$ that is independent of $S$ since
  $\#_b^*(Y,S)+\#_r^*(Y,S)$ is a constant independent of $S$. 
Finally we have  
$$
\frac{\sw(S)}{\Pr_{\bS'\sim \CH'}[\bS'=S]}
=(1\pm o_N(1)) \cdot \tau_7 \cdot \left(\frac{2}{L}\right)^{
\#_r^*(Y,S)+\#_{bb}(S)+|\calT_{4,b}(S)|}
$$
for some value $\tau_7>0$ that is independent of $S$.
Note that for any good coloring $S$, the quantity 
  $\#_r^*(Y,S)+\#_{bb}(S)+|\calT_{4,b}(S)|$ is equal to $|Y|$, a constant that 
  does not depend on $S$. This finishes the proof of (\ref{keyineq}) and \Cref{lemmaapprox}.

\section{A Lower Bound on Cycle Finding}
\label{sec:cycle-finding}

This section combines the results of Lemmas \ref{lemma:epochbound}, \ref{lemma:blueepochbound} and \ref{maintechnicallemma} to 
  establish \Cref{thirdmaintheorem}, which is restated below:

% restate main theorem
\mainmain*

\begin{proof}
Let $\calA$ be a $Q^*$-query algorithm.
We start with the definition of \emph{typical} pairs in the support of $\BR^*$
  with respect to $\calA$, and then show that
  $(\bC,\bG)\sim\BR^*$ is typical with probability $1-o_N(1)$.

\begin{defn} \label{def:typical}
We say a pair $(C,G)$ in the support of $\BR^*$
  is \emph{typical} with respect to an algorithm
  $\calA$ if the following conditions hold:
\begin{flushleft}\begin{enumerate}
    \item[(i)] The number of epochs during the execution of $\calA$ on 
      $(C,G)$ is $O(\UBQ^2/W + \UBQ/L)$.\vspace{-0.05cm}
    \item[(ii)] The number of blue surprise epochs during the execution of $\calA$
      on $(C,G)$ is $\smash{O(\UBQ^2/N)}$.\vspace{-0.05cm}
    \item[(iii)] For each $q\in [\UBQ]$, let $(H^{(q)},P^{(q)})$ denote the 
      knowledge pair of running $\calA$ on $(C,G)$ after\\ $q$ queries and let
      $E^{(q)}$ denote the current epoch \emph{(}in the epoch decomposition of 
      $H^{(q)}$\emph{)}.
    Then there is no blue path longer than $4\log N$ in $\KG(E^{(q)})$
      under the coloring $C$.
\end{enumerate}\end{flushleft}
\end{defn}

We combine Lemmas \ref{lemma:epochbound}, \ref{lemma:blueepochbound} and \ref{maintechnicallemma} to show that $(\bC,\bG)\sim \BR^*$ is typical 
  with respect to $\calA$
  with probability $1-o_N(1)$.
We focus on the third condition (iii) since the probability of $(\bC,\bG)$ satisfying
  the first two conditions is $1-o_N(1)$ by Lemmas \ref{lemma:epochbound} and \ref{lemma:blueepochbound}.
For (iii) we have
$$
\Pr_{(\bC,\bG)\sim \BR^*}\big[\text{$(\bC,\bG)$ violates (iii)}\big]
\le \sum_{q=1}^{Q^*} \Pr_{(\bC,\bG)\sim \BR^*}\big[\text{$(\bC,\bG)$
  violates (iii) after $q$ queries}\big].
$$
On the other hand, the $q^{th}$ probability in the sum can be written as
\begin{align}\nonumber
\sum_{\text{valid}\ (H,P)} &\Pr_{(\bC,\bG)\sim\BR^*}\big[\text{$\calA$ observes
  $(H,P)$ after $q$ queries on $(\bC,\bG)$}\big]\\
  &\hspace{0.8cm}\times \Pr_{(\bC,\bG)\sim \BR^*(H,P)}\big[\text{$(\bC,\bG)$ violates (iii) 
  after $q$ queries}\big],\label{qthprob}
\end{align}
where the sum is over all valid knowledge pairs $(H,P)$ of length $q$.
It follows from \Cref{maintechnicallemma} that the latter 
  probability in $(\ref{qthprob})$ is $o(N^{-2})$ for every valid knowledge
  pair $(H,P)$.
As a result, the probability of $(\bC,\bG)\sim\BR^*$ violating
  (iii) is $o(N^{-1})$ and thus $(\bC,\bG)$ is typical with probability $1-o_N(1)$.

Given a query history $H$ and a vertex $u$, we write $\anc(H,u)$
  to denote the set of ancestors of $u$ in $\KG(H)$, i.e., the set
  of vertices (other than $u$ itself) that have a directed path to $u$.
(If $u\notin \VKG(H)$ then $\anc(H,u)$ is trivially empty.)
The claim below shows that if $(C,G)$ is typical then at any time 
  during the execution of $\calA$ on $(C,G)$, every blue vertex 
  has a \emph{small} set of ancestors in $\KG(H)$. 

\begin{claim}\label{lowanc}
Let $(C,G)$ be a typical pair with respect to $\calA$.
Then for each $q\in [Q^*]$, letting $H$ be the query history of $\calA$
  after making $q$ queries on $(C,G)$,
  we have 
\begin{equation}\label{boundancestor}
\big|\anc(H,u)\big|\le O\left(\log N\cdot \left(\frac{{Q^*}^2}{W}+\frac{Q^\ast}{L}\right)\cdot \frac{{Q^*}^2}{N}\right), 
\end{equation}
for every vertex $u$ with $C(u)=\sb$.
\end{claim}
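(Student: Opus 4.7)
The plan is to decompose $\KG(H)$ into a tree part $G_0$ and a surprise part, bound the $G_0$-depth of any blue vertex using typicality, and then show that each blue-blue surprise edge adds only a bounded number of extra blue ancestors to $u$.

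First I would define $G_0 \subseteq \KG(H)$ to consist of the \emph{tree edges} $(v, v')$ for which $v' \notin \VKG$ at the moment $v$ was queried; the remaining edges of $\KG(H)$ are \emph{surprise edges}. Since each fresh vertex has a unique revealing query, every vertex has in-degree at most one in $G_0$, so $G_0$ is a forest. Moreover, because $C(u)=\sb$ and \Cref{basicproperty} forbids $\sr\to\sb$ edges, every ancestor of $u$ in $\KG(H)$ is blue, so every surprise edge lying on any $x \to u$ path must have both endpoints blue.

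Next I would bound, for each blue vertex $w \in \VKG(H)$, the number of $G_0$-ancestors of $w$ by $T \cdot 4\log N$, where $T = O(Q^{*2}/W + Q^*/L)$ is the typical bound on epochs. Write the unique $G_0$-path from the tree root down to $w$ as $r = v_0 \to v_1 \to \cdots \to v_m = w$; by \Cref{basicproperty} every $v_i$ is blue. If $t_i$ denotes the epoch in which $v_i$ was queried (and hence the edge $v_i \to v_{i+1}$ was added), then $v_{i+1}$ being fresh at $v_i$'s query forces $v_{i+1}$ not to have been queried earlier, giving $t_{i+1} \ge t_i$. The path therefore partitions into at most $T$ contiguous per-epoch segments, and the $t$-th segment is a blue subpath of $\KG(E_t)$, which by typicality condition (iii) has length at most $4\log N$. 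Summing yields $m \le T \cdot 4\log N$.

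Finally, for any $x \in \anc(H, u)$ I would trace an $x \to u$ path in $\KG(H)$. Either this path uses no surprise edge, in which case $x \in \anc_{G_0}(u)$, or else its first surprise edge $(y, z)$ satisfies $x \in \anc_{G_0}(y) \cup \{y\}$ with $(y, z)$ blue-blue. Writing $\anc_{G_0}$ for $G_0$-ancestors, this gives
\begin{equation*}
|\anc(H, u)| \;\le\; |\anc_{G_0}(u)| + \sum_{(y, z)} \bigl(|\anc_{G_0}(y)| + 1\bigr),
\end{equation*}
where the sum ranges over blue-blue surprise edges. Since each blue surprise epoch produces at most $d$ surprise edges, there are at most $dB$ terms, with $B = O(Q^{*2}/N)$ by typicality condition (ii); combining with the $G_0$-depth bound gives $|\anc(H,u)| = O(\log N \cdot T \cdot B)$, matching the claim. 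The main technical step is the $G_0$-depth bound in the second paragraph — verifying that the epoch labels along a $G_0$-path are non-decreasing (so the path cleanly splits per-epoch) and that each segment is a blue path in the corresponding $\KG(E_t)$ so that typicality condition (iii) can be invoked.
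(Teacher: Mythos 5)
Your proposal is correct and follows essentially the same route as the paper's proof: decompose the knowledge graph into tree edges plus (blue-blue) surprise edges, bound the tree-depth of blue vertices by (number of epochs)$\times 4\log N$ via typicality conditions (i) and (iii), bound the number of surprise edges via condition (ii), and combine multiplicatively. Your per-epoch segmentation of the root-to-$w$ path via non-decreasing epoch labels is a slightly more explicit bookkeeping of the same argument the paper carries out by tracking the blue knowledge graph epoch by epoch.
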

\begin{proof}
Recall that at the end of each blue surprise epoch, $\calA$ may find an edge $(u,v)$ such that the vertex $u$ being queried is blue and $v$ is a vertex encountered before. We refer to such an edge as a \emph{surprise edge} if $v$ also turns out to be blue.

Now we consider running $\calA$ on a typical pair $(C,G)$.
Let $(H^{(i)},P^{(i)})$ 
  denote the knowledge pair maintained by $\calA$ after $i$ queries,
  let $E^{(i)}$ be the current epoch, and let $H^{(i)}=H'^{(i)}\circ E^{(i)}$.
We focus on the evolution of the blue subgraph (the subgraph
  induced by its blue vertices) of 
  $\KG(H'^{(i)})$ over time.
We write $\BKG(H'^{(i)})$ to denote the blue subgraph of $\KG(H'^{(i)})$.

First we note that $\KG(H'^{(i)})$ (and thus, $\BKG(H'^{(i)})$) 
  is only updated at the end of each epoch. 
If an epoch ends at the $i^{th}$ query,
  a number of out-trees are added to $\KG(H'^{
  (i-1)})$. 
Each such tree (other than its root) is vertex-disjoint from $\KG(H'^{(i-1)})$.
In addition, if the epoch is a blue surprise epoch, no more than $d$ many
  surprise edges are added to $\KG(H'^{(i)})$.
Now focusing on $\BKG(H'^{(i)})$ vs $\BKG(H'^{(i-1)})$, 
  we have that at the end of each epoch, each out-tree added
  to $\BKG(H'^{(i-1)})$ satisfies the extra condition of having depth at most $4\log N$.
If it is the end of a blue surprise epoch we may need to add no more than $d$ surprise edges to
  $\BKG(H'^{(i)})$ .
  
As a result, letting $H$ be the query history of $\calA$ after making
  $q$ queries on a typical pair $(C,G)$, 
  we have that $\BKG(H')$ 
  is the union of (1) a forest in which each out-tree has depth at most
\begin{equation}\label{eq111}
O\left(\log N\cdot \left( \frac{{Q^*}^2}{W}+\frac{Q^\ast}{L}\right)\right)
\end{equation}
and (2) a set of at most
\begin{equation}\label{eq222}
O\left(\frac{{Q^*}^2}{N}\right)
\end{equation}
many surprise edges, where (\ref{eq111}) follows from the bound for the number of 
  epochs and (\ref{eq222}) follows found the bound for the number of
  blue surprise epochs given that $(C,G)$ is typical.
  
Let $u$ be a vertex in $\BKG(H')$.
To bound the number of its ancestors, we 
  consider an in-tree $T$ rooted at 
  $u$ such that every ancestor of $u$ appears in $T$ (with a directed path to $u$).
If we remove surprise edges from $T$, it is left with a vertex-disjoint
  union of directed paths; this is because, after removing surprise edges,
  $\BKG(H')$ is a forest of out-trees (so no vertex has indegree larger than $1$).
Since each path has length bounded by (\ref{eq111}) and the number of
  surprise edges is bounded by (\ref{eq222}),
  the number of vertices in $T$ (or the number of 
  ancestors of $u$) is bounded by (\ref{boundancestor}).

Now in \Cref{lowanc}, $u$ can be a vertex in $\VKG(E)$.
Note that $\KG(E)$ must be a forest of out-trees and because $(C,G)$ is typical,
  the blue subgraph of each such out-tree has depth at most $4\log N$.
As a result, considering vertices in $\VKG(E)$ may add a term 
  of $4\log N$ to our bound for the number of ancestors, which is still
  captured by (\ref{boundancestor}).
This finishes the proof of the claim.
\end{proof}

Now we show that $\calA$ finds a cycle in $(\bC,\bG)\sim \BR^*$ with
  probability $o_N(1)$. 
Given that $(\bC,\bG)$ is typical with probability at least $1-o_N(1)$, we have
\begin{align}
\Pr_{(\bC,\bG)\sim\BR^*}\big[\text{$\calA$ finds a cycle}\big]\label{maineq}\\
&\hspace{-3cm}\le o_N(1) + \Pr_{(\bC,\bG)\sim\BR^*}\big[\text{$(\bC,\bG)$ is typical
  and $\calA$ finds a cycle}\big]\nonumber \\[0.5ex]
&\hspace{-3cm}\le o_N(1) +\sum_{q\in [Q^*]}\Pr_{(\bC,\bG)\sim\BR^*}\big[\text{$(\bC,\bG)$ is typical
  and $\calA$ finds a cycle in the $q^{th}$ round}\big].\nonumber
\end{align}
As a result, it suffices to bound each probability in the sum by $o(1/N)$.

Fix any $q\in [Q^*]$.
Given a pair $(H,C)$, where $H=H'\circ E$ is a query history of length $q-1$
  and $C$ is a full coloring,
  we write $P(H,C)$ to denote the restriction of $C$ on $\VKG(H')$.
Then we have
\begin{align}
&\Pr_{(\bC,\bG)\sim\BR^*}\big[\text{$(\bC,\bG)$ is typical
  and $\calA$ finds a cycle in the $q^{th}$ round}\big] \label{hehe1eq}\\
&\hspace{2cm}\le \sum_{(H,C)} \Pr_{(\bC,\bG)\sim \BR^*}\big[\text{$\bC=C$
  and $\calA$ running
  on $(\bC,\bG)$ observes $(H,P(H,C))$}\big]\nonumber\\
  &\hspace{4cm}\times \Pr_{(\bC,\bG)\sim \BR^*(H,C)}
\big[\text{$\calA$ finds a cycle in the $q^{th}$ round}\big],\label{hehe2eq}
\end{align}
where the sum is over all $(H,C)$ such that every blue vertex
  (under $C$) in $\VKG(H)$ has its number of ancestors bounded by (\ref{boundancestor}).
This follows from  \Cref{lowanc} since $(\bC,\bG)$ is typical in (\ref{hehe1eq}).
  
Fix such a pair $(H,C)$ and let $u=\calA(H,P(H,C))$ be the 
  next vertex that is queried by $\calA$.
If $u\notin \VKG(H)$ or $u\in \VKG(H)$ is not blue under $C$,
  the probability in (\ref{hehe2eq}) is trivially $0$.
If $u\in \VKG(H)$ is blue,~we note that under $(\bC,\bG)\sim \BR^*(H,C)$, outneighbors of
  $u$ are picked randomly from $2N-1$ vertices without replacement. Consequently
 the probability that one of them is an ancestor of $u$ can be bounded by
$$
O\left(\log N \cdot \left(\frac{\UBQ^4}{WN^2} + \frac{\UBQ^3}{LN^2}\right)\right).
$$
As a result, this is an upper bound for (\ref{hehe2eq})
  as well as (\ref{hehe1eq}) and thus,
  the sum in (\ref{maineq}) is at most
\begin{equation}\label{hehe3eq}
        \UBQ \cdot O\left(\log N\cdot \left(\frac{\UBQ^4}{WN^2} + \frac{\UBQ^3}{LN^2}\right)\right)
        = O\left(\log N \cdot \left(\frac{\UBQ^5}{WN^2} + \frac{\UBQ^4}{LN^2}\right)\right)=o_N(1) 
    \end{equation}
with our choices of $L,W$ and $Q^*$.
This finishes the proof of the lemma. 
\end{proof}

Looking back regarding our choices of $L:=(2N)^{2/9}$ and $W := 2N/L = (2N)^{7/9}$, 
  we need $L,W$ and $Q^*$ to satisfy the following two inequalities
  for the proof to work: 
  (\ref{keykey}): $LQ^*=o(W)$ and that (\ref{hehe3eq}) above
  is $o_N(1)$.
Our choices of $L$ and $W$ are optimized to maximize the query
  complexity $Q^*$ under these conditions.

\section{Finding cycles in \br-graphs using $O(N^{13/18})$ queries}
\label{sec:sketch-algorithm}

Given the lower bound established above for cycle-finding in \br{}  graphs, one natural question concerns the limitations of this approach: what is the true query complexity of cycle-finding in graphs drawn from this distribution? This section sketches two algorithmic approaches that find cycles with high probability in random graphs $\bG \sim \BR$ for many values of the length parameter $L$. In particular, setting $L=\Theta(N^{2/9})$ as in our lower bound construction yields an algorithm for cycle finding in $\BR$ graphs with query complexity roughly $N^{13/18}$.

\textbf{Algorithm 1.} We begin with the following simple observation:  With high probability over a random \br~graph $\bG \sim \BR$, for each vertex $v \in [3N]$ it is possible to correctly determine the color (and layer, if the color is red) of $v$ in $O(L)$ queries.  This is a straightforward consequence of the following two facts. First, if $v$ is a red vertex in layer $R_i$, then every directed path from $v$ reaches a sink after exactly $L-i$ edges. Second, for almost every graph $\bG \in \BR$, a sequence of random walks made from any blue vertex in $\bG$ will differ significantly in the distance they travel before they find a sink. Thus an algorithm can determine the color and layer of $v$ with high probability by making several random walks of length $O(L)$.

We can leverage this observation to find a cycle with high probability in $O(L\sqrt{N})$ queries as follows. The algorithm works by first identifying a blue vertex in $O(L)$ queries by randomly sampling and confirming its color using the procedure described above. Each child of a blue vertex $\bG \sim \BR$ is blue with probability 1/2, so we can grow a blue path from our seed vertex at a cost of roughly $O(L)$ queries to confirm the color of each additional vertex.\footnote{ With high probability, the number of red vertices we identify is proportional to the length of the path. If a blue node has no blue children, an event which occurs with probability $1/2^d$, we backtrack to the previous node.} We construct a blue path of length $C \sqrt{N}$, at which point each successive blue vertex added to the path creates a cycle with probability at least $C/(2 \sqrt{N})$. By a birthday paradox argument, the next $C \sqrt{N}$ blue vertices added to the path yield a cycle with high probability for large $C$. Setting $L = (2N)^{2/9}$ as in our lower bound proof, the query complexity of the resulting algorithm is $O(N^{13/18})$.

{\bf Algorithm 2.} Algorithm 1 provides a good upper bound on the query complexity of cycle-finding in \br{}  graphs when $L$ is relatively small. In this section we sketch a more sophisticated strategy that gives a query-efficient algorithm when $L$ is large.
The key observation here is that for almost every graph $\bG \sim \BR,$ given any red vertex $v$ in layer $R_i$, by making $P := \tilde{O}(W)$ queries it is possible to query almost every vertex in layer $R_{i+t}$, where $t=\log_d P.$ This is accomplished by performing a breadth-first search of depth $t$ starting from vertex $v$.  We think of this as the query algorithm ``building a wall'' at layer $R_{i+t}$.

Algorithm 2 has two stages. In the first stage, it builds a series of walls, effectively mapping out the structure of $G$. In the second stage, it exploits its knowledge about the structure of $G$ to efficiently build a long blue path using a method similar to Algorithm 1. 

In more detail, the first stage starts by first identifying $M$ red vertices by sampling vertices at random and using random walks to confirm their color, a process which takes $O(LM)$ queries. In the rest of the first stage the algorithm then performs the wall-building procedure at each of these vertices, a process which takes roughly $\tilde{O}(MW)$ queries.\footnote{ If the algorithm finds a sink while attempting to run a breadth-first search of depth $t$, this wall fails and the procedure continues. } At this point, the query algorithm has built $\Theta(M)$ walls, which with high probability are typically spaced at intervals roughly $O(L/M)$ apart throughout the layers $R_1,\dots,R_L$. Thus the first stage takes about $\tilde{O}(M(W+L)) = \tilde{O}(M(N/L+L))$ queries.

In the second stage, Algorithm 2 is the same as Algorithm 1, except that Algorithm 2 can identify vertex colors by reaching a wall instead of a sink vertex. Consider a random walk from a vertex $v$. If $v$ is in layer $R_i$, then most random walks from $v$ will collide with the next wall in a particular, fixed number of queries $a_i$, which will typically be $O(L/M)$. If $v$ is a blue vertex, then most random walks from $v$ will still collide with a wall in $O(L/M)$ queries, but with high probability the length of these walks will vary significantly. As a result, using the same method as Algorithm 1, Algorithm 2 can identify vertex colors in $O(L/M)$ queries and find a cycle of length $O(\sqrt{N})$ in $O(L\sqrt{N}/M)$ queries. Thus the query complexity of Algorithm 2 is $\tilde{O}(M(N/L+L) + L\sqrt{N}/M)$. If $L \gg N^{1/4}$, then taking $M={\frac {N^{1/4}L}{\sqrt{N + L^2}}} \gg 1$, we get that the query complexity of this second approach is roughly $\tilde{O}(N^{1/4} \sqrt{N + L^2})$, which is $o(N)$ for $N^{1/4} \ll L \leq o(N^{3/4}).$

\section{Directions for future work:  towards upper bounds} \label{sec:future}

Given our $\tilde{\Omega}(N^{5/9})$ lower bound, it is natural to ask the true query complexity of cycle finding in sparse digraphs that are $\eps$-far from acyclic. We conjecture that there is an $o(N)$-query algorithm for this problem, and we pose the problem of finding such an algorithm as a tantalizing goal for future work.  We conclude with a few comments towards this goal:

\begin{flushleft}\begin{enumerate}
\item  Let $\ell = \ell(m,\eps)$ be the smallest value such that every $m$-edge digraph $G$ with the smallest feedback arc set of size at least $\eps m$
  must have a cycle of length at most $\ell.$  Fox \cite{Fox18} has proved that $\ell(m,\eps) \le
{\tilde{O}(\log m)}/{\eps}$.
It follows that every bounded-outdegree-$d$ $N$-vertex digraph that is constant-far from acyclic must contain a cycle of length $\tilde{O}(\log N)$.  This structural result may be viewed as a highly efficient \emph{nondeterministic} algorithm (with\\ query complexity $\tilde{O}(\log N)$) for the cycle-finding problem that we consider.

\item It is possible that a simple algorithm based on breadth first search may have sublinear query complexity for cycle-finding in far-from-acyclic bounded-degree digraphs.  In more detail, we do not know a counterexample to the following conjecture:  ``Let $0 < \eps < 1$ be a (small) constant.  Let $\calA'$ be an algorithm which works as follows:  for $C=C(\eps)$ (a large constant) many repetitions, $\calA'$ picks a random vertex $v$ in $G$ and performs a breadth first search out from $v$ until $CN/\log N$ vertices have been explored.  When run on any $N$-vertex graph $G$ that is $\eps$-far from acyclic, one of the $C(\eps)$ BFS's performed by algorithm $\calA'$ finds a cycle with constant probability.'' (We note that by considering the case in which $G$ is a union of $d$ many randomly chosen bipartite matchings, it can be shown that $N/\log N$ cannot be replaced by any function of $N$ that is $o(N/\log N)$.)
\end{enumerate}\end{flushleft}

\section*{Acknowledgements}  

We thank Jacob Fox for telling us about \cite{Fox18}.  X.C. is supported by NSF IIS-1838154 and NSF CCF-1703925.
R.A.S. is supported by NSF IIS-1838154, NSF CCF-1814873, NSF CCF-1563155, and by the Simons Collaboration on Algorithms and Geometry.

\bibliographystyle{alpha}
\begin{flushleft}
\bibliography{allrefs}
\end{flushleft}

\end{document}